\newcommand{\red}[1]{\textcolor{red}{#1}}
\newcommand{\blue}[1]{\textcolor{blue}{#1}}
\newcommand{\U}{\mathcal{U}}
\renewcommand{\O}{\mathcal{O}}
\newcommand{\F}{\mathcal{F}}
\newcommand{\R}{\mathcal{R}}
\renewcommand{\P}{\mathcal{P}}
\newcommand{\sem}[1]{\llbracket #1 \rrbracket}
\newcommand{\define}{\mbox{\it define}}
\newcommand{\constrain}{\mbox{\it constrain}}
\newcommand{\Cond}{\mbox{\it cond}}
\newcommand{\CondPr}{\mbox{\it cond'}}
\newcommand{\Def}{\mbox{\it def}}
\newcommand{\DefPr}{\mbox{\it def'}}
\newcommand{\DefDPr}{\mbox{\it def''}}
\newcommand{\Con}{\mbox{\it con}}
\newcommand{\true}{\mbox{\it true}}
\newcommand{\false}{\mbox{\it false}}
\newcommand{\AX}{\mbox{\bf AX}}
\newcommand{\EG}{\mbox{\bf EG}}
\newcommand{\AG}{\mbox{\bf AG}}
\newcommand{\E}{\mbox{\bf E}}
\newcommand{\A}{\mbox{\bf A}}
\newcommand{\Until}{\mbox{\bf U}}
\newcommand{\G}{\mbox{\bf G}}
\newcommand{\X}{\mbox{\bf X}}
\newcommand{\Paths}{\mbox{\it Paths}}
\begin{document}

\title{Layered Fixed Point Logic}
\author{Piotr Filipiuk 
   \and Flemming Nielson 
   \and Hanne Riis Nielson 
   }
\institute{DTU Informatics, Richard Petersens Plads,Technical University of Denmark, DK-2800 Kongens Lyngby, Denmark \\
           \email{\{pifi,nielson,riis\}@imm.dtu.dk}}

\maketitle

\begin{abstract}
  We present a logic for the specification of static analysis problems
  that goes beyond the logics traditionally used. Its most prominent
  feature is the direct support for both inductive computations of
  behaviors as well as co-inductive specifications of properties. Two
  main theoretical contributions are a Moore Family result and a
  parametrized worst case time complexity result. We show that the
  logic and the associated solver can be used for rapid prototyping
  and illustrate a wide variety of applications within Static
  Analysis, Constraint Satisfaction Problems and Model Checking. In
  all cases the complexity result specializes to the worst case time
  complexity of the classical methods.
\end{abstract}

\section{Introduction}
Static analysis \cite{bib:hecht1977,bib:ppa} is a successful approach
to the validation of properties of programming languages. It can be
seen as a two-phase process where we first transform the analysis
problem into a set of constraints that, in the second phase, is solved
to produce the analysis result of interest. The constraints may be
expressed in a language tailored to the problem at hand, or they may
be expressed in a general purpose constraint language such as Datalog
\cite{bib:datalog1,bib:datalog2} or ALFP \cite{bib:ssforalfp}.

Model checking \cite{bib:clarkemc,bib:pmc} is an automatic technique
for verifying hardware and more recently software
systems. Specifications are expressed in modal logic, whereas the
system is modeled as a transition system or a Kripke structure. Given
a system description the model checking algorithm either proves that
the system satisfies the property, or reports a counterexample that
violates it.

Constraint Satisfaction Problems (CSPs) \cite{bib:Mackworth77} are the
subject of intense research in both artificial intelligence and
operations research. They consist of variables with constraints on
them, and many real-world problems can be described as CSPs. A major
challenge in constraint programming is to develop efficient generic
approaches to solve instances of the CSP.


In this paper we present a logic for specification of analysis
problems that goes beyond the logics traditionally used. Its most
prominent feature is the direct support for both inductive
computations of behaviors as well as co-inductive specifications of
properties. At the same time the approach taken falls within the
Abstract Interpretation \cite{bib:cousot79,bib:cousot77} framework,
thus there always is a unique best solution to the analysis problem
considered. We show that the logic and the associated solver can be
used for rapid prototyping and illustrate a wide variety of
applications within Static Analysis, Constraint Satisfaction Problems
and Model Checking.

One can notice a resemblance of the logic to modal $\mu$-calculus
\cite{bib:kozen83,bib:clarkemc}, which is extensively used in various
areas of computer science such as e.g~computer-aided verification. Its
defining feature is the addition of least and greatest fixpoint
operators to modal logic; thus it achieves a great increase in
expressive power, but at the same time an equally great increase in
difficulty of understanding.

The paper is organized as follows. In Section \ref{sec:logic} we
define the syntax and semantics of LFP. In Section
\ref{sec:optimal-solutions} we establish a Moore Family result and
estimate the worst case time complexity. In Section \ref{sec:sa} we show
an application of LFP to Static Analysis. We continue in Section
\ref{sec:csp} with an application to the Constraint Satisfaction
Problem. An application to Model Checking in presented in Section
\ref{sec:mc}. We conclude in Section \ref{sec:conclusions}.

\section{Syntax and Semantics}
\label{sec:logic}
In this section, we introduce Layered Fixed Point Logic (abbreviated
LFP). The LFP formulae are made up of layers. Each layer can either be
a {\it define} formula which corresponds to the inductive definition,
or a {\it constrain} formula corresponding to the co-inductive
specification. The following definition introduces the syntax of LFP.

\begin{definition}
  Given a fixed countable set $\mathcal{X}$ of variables, a non-empty
  universe $\mathcal{U}$, a finite set of function symbols $\F$, and a
  finite alphabet $\mathcal{R}$ of predicate symbols, we define the
  set of LFP formulae, $cls$, together with clauses, $cl$,
  conditions, $\Cond$, constrains, $\Con$, definitions, $\Def$, and terms
  $u$ by the grammar:
  \begin{center}
    \begin{tabular}{ l c l }
      $u$ & ::= & $x \mid f(\vec u)$\\
      $\Cond$ & ::= & $ R(\vec x) \mid \neg R(\vec x) \mid \Cond_1
      \wedge \Cond_2 \mid \Cond_1 \vee \Cond_2$ \\ 
      & $ \mid $ & $ \exists x: \Cond \mid
      \forall x: \Cond \mid \true \mid \false$ \\
      $\Def$ & ::= & $ \Cond \Rightarrow R(\vec u) \mid \forall x: \Def
      \mid \Def_1 \wedge \Def_2$ \\
      $\Con$ & ::= & $ R(\vec u) \Rightarrow \Cond \mid \forall x: \Con
      \mid \Con_1 \wedge \Con_2$ \\
      $cl_i$ & ::= & $ \define(\Def) \mid \constrain(\Con) $ \\
      $cls$ & ::= & $ cl_1,\ldots,cl_s$
    \end{tabular}
  \end{center}
  Here $x\in\mathcal{X}$, $R \in \mathcal{R}$, $f
  \in \F$ and $1 \leq i \leq s$. We say that $s$ is the order of the
  LFP formula $cl_1,\ldots,cl_s$.
\end{definition}

We allow to write $R(\vec u)$ for $\true \Rightarrow R(\vec u)$, $\neg R(\vec
u)$ for $R(\vec u) \Rightarrow \false$ and we abbreviate zero-arity
functions $f()$ as $f \in \mathcal{U}$. Occurrences of $R(\vec x)$ and
$\neg R(\vec x)$ in conditions are called positive and negative
queries, respectively. Occurrences of $R(\vec u)$ on the right hand
side of the implication in define formulas are called defined
occurrences. Occurrences of $R(\vec u)$ on the left hand side of the
implication in constrain formulas are called constrained
occurrences. Defined and constrained occurrences are jointly called
assertions.

In order to ensure desirable theoretical and pragmatic properties in
the presence of negation, we impose a notion of
\textit{stratification} similar to the one in Datalog
\cite{bib:datalog1,bib:datalog2}. Intuitively, stratification ensures
that a negative query is not performed until the predicate has been
fully asserted (defined or constrained). This is important for
ensuring that once a condition evaluates to true it will continue to
be true even after further assertions of predicates.
\begin{definition}\label{def:stratification}
The formula $cl_1,\ldots, cl_s$ is
stratified if for all $i = 1,\ldots, s$ the following properties hold:
\begin{itemize}
\item Relations asserted in $cl_i$ must not be asserted in
  $cl_{i+1},\ldots, cl_s$
\item Relations positively used in $cl_i$ must not be asserted in
  $cl_{i+1},\ldots, cl_s$
\item Relations negatively used in $cl_i$ must not be asserted in
  $cl_i,\ldots, cl_s$
\end{itemize}
The function $rank : \mathcal{R} \rightarrow \{0,\ldots,s\}$ is then uniquely
defined as
$$
rank(R) = \max(\{0\} \cup \{i \mid R \text{ is asserted in }cl_i\})
$$
\end{definition}

\begin{example}
Using the notion of stratification we can define equality $eq$ and
non-equality $neq$ predicates as follows
$$
\define(\forall x: \true \Rightarrow eq(x,x)),\define(\forall
x: \forall y: \neg eq(x,y) \Rightarrow neq(x,y))
$$
According to Definition \ref{def:stratification} the formula is
stratified, since predicate $eq$ is negatively used only in the layer
above the one that defines it.
\end{example}

To specify the semantics of LFP we introduce the interpretations
$\varrho$, $\zeta$ and $\varsigma$ of predicate symbols, function
symbols and variables, respectively. Formally we have
$$
\begin{array}{rl}
  \varrho: & \prod_{k} \mathcal{R}_{/k} \rightarrow \P(\U^k) \\
  \zeta: & \prod_{k} \F_{/k} \rightarrow \U^k
  \rightarrow \U \\
  \varsigma: & \mathcal{X} \rightarrow \mathcal{U} \\
\end{array}
$$
In the above $\mathcal{R}_{/k}$ stands for a set of predicate symbols
of arity $k$, then $\mathcal{R}$ is a disjoint union of
$\mathcal{R}_{/k}$, hence $\mathcal{R}=\ \biguplus_{k}
\mathcal{R}_{/k}$.  Similarity $\F_{/k}$ is a set of function symbols
of arity $k$ and $\F=\ \biguplus_{k} \F_{/k}$. The interpretation of
variables is given by $\sem{x}(\zeta, \varsigma)=\varsigma(x)$, where
$\varsigma(x)$ is the element from $\mathcal{U}$ bound to $x\in{\cal
  X}$. Furthermore, the interpretation of function terms is defined as
$\sem{f(\vec u)}(\zeta,\varsigma)=\sem{f}(\zeta,[\,])(\sem{\vec
  u}(\zeta,\varsigma))$. It is generalized to sequences $\vec u$ of
terms in a point-wise manner by taking $\sem{a}(\zeta, \varsigma)=a$
for all $a\in {\cal U}$, and $\sem{(u_1,\ldots, u_k)}(\zeta,
\varsigma)=(\sem{u_1}(\zeta, \varsigma), \ldots, \sem{u_k}(\zeta,
\varsigma))$.

The satisfaction relations for conditions $\Cond$, definitions $\Def$ and
constrains $\Con$ are specified by:
\[
(\varrho, \varsigma) \models \Cond,\quad (\varrho, \zeta,
\varsigma) \models \Def\quad \mathrm{and}\ (\varrho, \zeta, \varsigma)
\models \Con
\]
The formal definition is given in Table \ref{LFPSemantics}; here
$\varsigma[x\mapsto a]$ stands for the mapping that is as
$\varsigma$ except that $x$ is mapped to $a$.

\begin{table}
\caption{Semantics of LFP}
$$
\begin{array}{lllll}
  (\varrho, \varsigma) & \models & R(\vec x) &
  \underline{\texttt{iff}} & \sem{\vec x}([\,], \varsigma) \in \varrho(R)
  \\
  (\varrho, \varsigma) & \models & \neg R(\vec x) & \underline{\texttt{iff}} & \sem{\vec x}([\,], \varsigma) \notin \varrho(R) \\
  (\varrho, \varsigma) & \models & \Cond_1 \wedge \Cond_2 & \underline{\texttt{iff}} & (\varrho, \varsigma) \models \Cond_1 \text{ and } (\varrho, \varsigma) \models \Cond_2
  \\
  (\varrho, \varsigma) & \models & \Cond_1 \vee \Cond_2 & \underline{\texttt{iff}} & (\varrho, \varsigma) \models \Cond_1 \text{ or } (\varrho, \varsigma) \models \Cond_2 
  \\
  (\varrho, \varsigma) & \models & \exists x:\Cond & \underline{\texttt{iff}} & (\varrho, \varsigma[x \mapsto a]) \models \Cond \text{ for some }a \in \cal{U}
  \\
  (\varrho, \varsigma) & \models & \forall x:\Cond &
  \underline{\texttt{iff}}
  & (\varrho, \varsigma[x \mapsto a]) \models \Cond \text{ for all }a
  \in \cal{U} \\
  (\varrho, \varsigma) & \models & \true &
  \underline{\texttt{iff}} & {\sf always} \\
  (\varrho, \varsigma) & \models & \false &
  \underline{\texttt{iff}} & {\sf never} \\ \\
  (\varrho, \zeta, \varsigma) & \models & R(\vec u) &
  \underline{\texttt{iff}} & \sem{\vec u}(\zeta, \varsigma) \in \varrho(R)  \\
  (\varrho, \zeta, \varsigma) & \models & \Def_1 \wedge \Def_2 &
  \underline{\texttt{iff}} & (\varrho, \zeta, \varsigma) \models \Def_1
  \text{ and } (\varrho, \zeta, \varsigma) \models \Def_2 
  \\
  (\varrho, \zeta, \varsigma) & \models & \Cond \Rightarrow R(\vec u) &
  \underline{\texttt{iff}} & (\varrho, \zeta, \varsigma) \models R(\vec u) \text{ whenever } (\varrho, \varsigma) \models \Cond 
  \\
  (\varrho, \zeta, \varsigma) & \models & \forall x:\Def &
  \underline{\texttt{iff}} & (\varrho, \zeta, \varsigma[x \mapsto a]) \models \Def \text{ for all }a \in \cal{U} 
  \\ \\
  (\varrho, \zeta, \varsigma) & \models & R(\vec u) & \underline{\texttt{iff}} & \sem{\vec u}(\zeta, \varsigma) \in \varrho(R) 
  \\
  (\varrho, \zeta, \varsigma) & \models & \Con_1 \wedge \Con_2 & \underline{\texttt{iff}} & (\varrho, \zeta, \varsigma) \models \Con_1 \text{ and } (\varrho, \zeta, \varsigma) \models \Con_2 
  \\
  (\varrho, \zeta, \varsigma) & \models & R(\vec u) \Rightarrow \Cond &
  \underline{\texttt{iff}} & (\varrho, \varsigma) \models \Cond \text{
    whenever } (\varrho, \zeta, \varsigma) \models R(\vec u) 
  \\
  (\varrho, \zeta, \varsigma) & \models & \forall x:\Con &
  \underline{\texttt{iff}} & (\varrho, \zeta, \varsigma[x \mapsto a]) \models
  \Con \text{ for all }a \in \cal{U} \\ \\

  (\varrho, \zeta, \varsigma) & \models & cl_1 , \ldots, cl_s & \underline{\texttt{iff}} & (\varrho, \zeta, \varsigma) \models
  cl_i \mbox{ for all }1 \leq i \leq s 
\end{array}
$$
\label{LFPSemantics}
\end{table}

\section{Optimal Solutions}
\label{sec:optimal-solutions}
\paragraph{Moore Family.}
First we establish a Moore family result for LFP, which guarantees
that there always is a unique best solution for LFP formulae.

\begin{definition}
A Moore family is a subset $Y$ of a complete lattice $L=(L,\sqsubseteq)$
that is closed under greatest lower bounds: $\forall Y' \subseteq Y:
\bigsqcap Y' \in Y$.
\end{definition}

It follows that a Moore family always contains a least element,
$\bigsqcap Y$, and a greatest element, $\bigsqcap \emptyset$, which
equals the greatest element, $\top$, from $L$; in particular, a Moore
family is never empty. The property is also called the model intersection
property, since whenever we take a {\it meet} of a number of models we still
get a model.

Let $\Delta = \{ \varrho \mid \varrho: \prod_{k} \mathcal{R}_{/k}
\rightarrow \P(\U^k) \}$ denote the set of interpretations $\varrho$
of predicate symbols in $\mathcal{R}$ over $\mathcal{U}$. We define a
lexicographical ordering $\sqsubseteq$ defined by $\varrho_1
\sqsubseteq \varrho_2$ if and only if there is some $0 \leq j \leq s$
, where $s$ is the order of the formula, such that the following
properties hold:
\begin{enumerate}[(a)]
\item $\varrho_1(R)=\varrho_2(R)$ for all $R \in \mathcal{R}$ with
  $rank(R) < j$,\label{itm:rank-less}
\item $\varrho_1(R) \subseteq \varrho_2(R)$ for all $R \in \mathcal{R}$ with
  $rank(R) = j$ and either $j=0$ or $R$ is a {\it defined} relation,\label{itm:rank-eq-def}
\item $\varrho_1(R) \supseteq \varrho_2(R)$ for all $R \in \mathcal{R}$ with
  $rank(R) = j$ and $R$ is a {\it constrained} relation,\label{itm:rank-eq-con}
\item either $j=s$ or $\varrho_1(R) \neq \varrho_2(R)$ for some
  relation $R \in \mathcal{R}$ with $rank(R) = j$.\label{itm:rank-s}
\end{enumerate}

\begin{lemma}\label{lemma:partial-order}
$\sqsubseteq$ defines a partial order.
\end{lemma}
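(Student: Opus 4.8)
The plan is to verify the three defining properties of a partial order for $\sqsubseteq$: reflexivity, antisymmetry, and transitivity. Throughout I fix the order $s$ of the formula and the rank function from Definition~\ref{def:stratification}. It is convenient to first record a reformulation: $\varrho_1 \sqsubseteq \varrho_2$ holds iff either $\varrho_1 = \varrho_2$ (which will be the case $j = s$ with all ranks agreeing, or more directly should be shown to follow), or there is a \emph{least} index $j$ at which $\varrho_1$ and $\varrho_2$ differ on some relation of that rank, and at that rank the inclusions go the ``right'' way (subset for defined relations and for the $j=0$ stratum, superset for constrained relations). Establishing that this ``least differing rank'' is well-defined and that conditions (a)--(d) pin it down uniquely is the first step, and it makes the subsequent arguments cleaner.

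For \textbf{reflexivity}, $\varrho \sqsubseteq \varrho$ follows by taking $j = s$: conditions (a)--(c) hold with equality everywhere, and condition (d) is satisfied by its first disjunct $j = s$. For \textbf{antisymmetry}, suppose $\varrho_1 \sqsubseteq \varrho_2$ via index $j$ and $\varrho_2 \sqsubseteq \varrho_1$ via index $j'$. If either $j = s$ or $j' = s$, I would argue that (a)--(c) already force $\varrho_1 = \varrho_2$: when $j = s$, conditions (a)--(c) say that for every relation $R$, depending on whether $rank(R) < s$ or $rank(R) = s$, we have $\varrho_1(R) = \varrho_2(R)$ or a one-sided inclusion at rank $s$; combined with the reverse relation this pins down equality. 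Otherwise $j, j' < s$, and then condition (d) gives a rank-$j$ relation on which $\varrho_1,\varrho_2$ differ and a rank-$j'$ relation on which they differ. By minimality of $j$ (from the reformulation) applied to $\varrho_2 \sqsubseteq \varrho_1$, we get $j' \le j$ and symmetrically $j \le j'$, so $j = j'$. But at this common rank, (b)/(c) for $\varrho_1 \sqsubseteq \varrho_2$ and for $\varrho_2 \sqsubseteq \varrho_1$ give opposite inclusions $\varrho_1(R) \subseteq \varrho_2(R)$ and $\varrho_2(R) \subseteq \varrho_1(R)$ for every relation $R$ of that rank (the defined/constrained dichotomy is the same in both directions, and the inclusion direction only depends on that dichotomy), forcing $\varrho_1(R) = \varrho_2(R)$ for all such $R$ — contradicting (d). Hence this case cannot occur, so $\varrho_1 = \varrho_2$.

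For \textbf{transitivity}, assume $\varrho_1 \sqsubseteq \varrho_2$ via $j_{12}$ and $\varrho_2 \sqsubseteq \varrho_3$ via $j_{23}$; I must produce a witness index for $\varrho_1 \sqsubseteq \varrho_3$. Using the reformulation, let $j_{12}$ and $j_{23}$ be the least ranks of disagreement (or $s$ with everything equal). If $\varrho_1 = \varrho_2$ or $\varrho_2 = \varrho_3$ the claim is immediate, so assume both are proper, and set $j = \min(j_{12}, j_{23})$. Below rank $j$, both pairs agree, so $\varrho_1$ and $\varrho_3$ agree on relations of rank $< j$, giving (a). At rank $j$: if $j_{12} < j_{23}$, then $\varrho_2(R) = \varrho_3(R)$ for all rank-$j$ relations $R$, so the inclusion $\varrho_1(R) \subseteq \varrho_2(R)$ (defined / $j=0$ case) or $\varrho_1(R) \supseteq \varrho_2(R)$ (constrained case) transfers verbatim to $\varrho_3$, and (d) holds since there is a rank-$j$ relation witnessing $\varrho_1 \ne \varrho_2 = \varrho_3$ on it (or $j = s$); symmetrically if $j_{23} < j_{12}$; and if $j_{12} = j_{23} = j$, then at rank $j$ we compose $\varrho_1(R) \subseteq \varrho_2(R) \subseteq \varrho_3(R)$ for defined / $j=0$ relations and the reverse chain for constrained relations, giving (b)/(c), while for (d) either $j = s$ or I must exhibit a rank-$j$ relation on which $\varrho_1$ and $\varrho_3$ differ. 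This last point is the main obstacle: it is conceivable that $\varrho_1$ and $\varrho_2$ differ at rank $j$, and $\varrho_2$ and $\varrho_3$ differ at rank $j$, yet $\varrho_1 = \varrho_3$ at rank $j$; I would rule this out by noting that the inclusions are \emph{nested and same-directioned per relation} — for each individual rank-$j$ relation $R$ we have $\varrho_1(R) \subseteq \varrho_2(R) \subseteq \varrho_3(R)$ (or all $\supseteq$), so $\varrho_1(R) = \varrho_3(R)$ would force $\varrho_1(R) = \varrho_2(R)$ for that $R$; since by (d) for $\varrho_1 \sqsubseteq \varrho_2$ there is some rank-$j$ relation $R_0$ with $\varrho_1(R_0) \ne \varrho_2(R_0)$, that same $R_0$ satisfies $\varrho_1(R_0) \ne \varrho_3(R_0)$, establishing (d). In every case we obtain a valid witness index for $\varrho_1 \sqsubseteq \varrho_3$, completing the argument.
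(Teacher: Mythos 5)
Your proof is correct and follows essentially the same route as the paper's: reflexivity with witness $j=s$, transitivity with $j=\min(j_{12},j_{23})$, and antisymmetry via the least rank of disagreement. If anything, your treatment of condition (d) in the transitivity argument (tracking which of the two pairs supplies the rank-$j$ disagreement witness, and using that the inclusions at a fixed relation are same-directioned) is more careful than the paper's ``without loss of generality'' step, which glosses over the case where the disagreement witness comes from the second pair.
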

\begin{proof}
See Appendix \ref{proof:lemma:partial-order}.\qed
\end{proof}

\begin{lemma}\label{lemma:complete-lattice}
$(\Delta, \sqsubseteq)$ is a complete lattice with the greatest lower
bound given by
$$
(\bigsqcap M)(R) = \left\{
\begin{array}{ll} 
  \bigcap \{ \varrho(R) \mid \varrho \in M_j \} & \mbox{if
    $rank(R)=j$ and} \\
  & \mbox{either $j=0$ or $R$ is {\it defined} in $cl_j$.} \\
  \bigcup \{ \varrho(R) \mid \varrho \in M_j \} & \mbox{if
    $rank(R)=j$ and} \\
  & \mbox{$R$ is {\it constrained} in $cl_j$.}
\end{array}\right.
$$
where
$$
M_j = \{ \varrho \in M \mid \forall R': rank(R') < j \Rightarrow
(\bigsqcap M)(R')=\varrho(R') \}
$$
\end{lemma}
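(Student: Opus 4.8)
The plan is to invoke Lemma~\ref{lemma:partial-order}, which already gives that $\sqsubseteq$ is a partial order, and then to show that \emph{every} subset $M \subseteq \Delta$ has a greatest lower bound, namely the element $\varrho_0 := \bigsqcap M$ prescribed by the displayed formula. Once this is done, completeness follows in the standard way: least upper bounds are obtained as $\bigsqcup M = \bigsqcap\{\varrho \mid \varrho \text{ is an upper bound of } M\}$, and the top and bottom elements are $\bigsqcap\emptyset$ and $\bigsqcap\Delta$.

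The first step is to check that $\varrho_0$ is well defined and belongs to $\Delta$. The formula looks circular, since the value of $\varrho_0$ on a relation of rank $j$ refers to $M_j$, whose definition refers back to $\varrho_0$; this is resolved by recursion on the rank. For $j=0$ we have $M_0 = M$, so $\varrho_0(R)$ is fixed for every $R$ with $rank(R)=0$; and for $j \geq 1$ the set $M_j$ depends only on the values $\varrho_0(R')$ with $rank(R') < j$, which are already determined, after which $\varrho_0(R)$ is fixed for every $R$ with $rank(R)=j$. Each such value is an intersection or union of subsets of $\U^k$ ($k$ the arity of $R$), hence a subset of $\U^k$, so $\varrho_0 \in \Delta$. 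Here one uses the conventions $\bigcap\emptyset = \U^k$ and $\bigcup\emptyset = \emptyset$, which matter exactly when $M_j = \emptyset$ (in particular when $M = \emptyset$, where $\varrho_0$ becomes the top element of $\Delta$).

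The second step is to show $\varrho_0$ is a lower bound, i.e.\ $\varrho_0 \sqsubseteq \varrho$ for every $\varrho \in M$. If $\varrho_0 = \varrho$, take $j = s$. Otherwise let $j$ be the least rank such that $\varrho_0(R) \neq \varrho(R)$ for some $R$ with $rank(R)=j$; then~(\ref{itm:rank-less}) holds by minimality of $j$, and moreover $\varrho \in M_j$ because $\varrho$ coincides with $\varrho_0$ on every relation of rank below $j$. Consequently, for $R$ of rank $j$: if $j=0$ or $R$ is defined then $\varrho_0(R) = \bigcap\{\varrho''(R) \mid \varrho'' \in M_j\} \subseteq \varrho(R)$, which is~(\ref{itm:rank-eq-def}); if $R$ is constrained then $\varrho_0(R) = \bigcup\{\varrho''(R) \mid \varrho'' \in M_j\} \supseteq \varrho(R)$, which is~(\ref{itm:rank-eq-con}); and~(\ref{itm:rank-s}) holds by the choice of $j$.

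The third step -- the main obstacle -- is to show $\varrho_0$ is the \emph{greatest} lower bound: given any lower bound $\varrho'$ of $M$, prove $\varrho' \sqsubseteq \varrho_0$. We may assume $\varrho' \neq \varrho_0$ (else take $j=s$), and let $j^*$ be the least rank at which $\varrho'$ and $\varrho_0$ disagree on some relation, so~(\ref{itm:rank-less}) holds at $j^*$ and~(\ref{itm:rank-s}) at $j^*$ is immediate from the choice of $j^*$. It remains to verify~(\ref{itm:rank-eq-def}) and~(\ref{itm:rank-eq-con}) at $j^*$. If $M_{j^*} = \emptyset$ this is trivial, since then $\varrho_0(R)$ is $\U^k$ or $\emptyset$ on every relation of rank $j^*$. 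If $M_{j^*} \neq \emptyset$, fix $\varrho \in M_{j^*}$ and let $j_\varrho$ witness $\varrho' \sqsubseteq \varrho$. One first argues $j_\varrho \geq j^*$: were $j_\varrho < j^*$, then $\varrho$ agrees with $\varrho_0$ on all rank-$j_\varrho$ relations (as $\varrho \in M_{j^*}$) and so does $\varrho'$ (by minimality of $j^*$), hence $\varrho$ and $\varrho'$ agree there, contradicting~(\ref{itm:rank-s}) for $\varrho' \sqsubseteq \varrho$ (note $j_\varrho < s$). Therefore, for every $R$ of rank $j^*$ one obtains, either from~(\ref{itm:rank-less}) when $j_\varrho > j^*$ or from~(\ref{itm:rank-eq-def})/(\ref{itm:rank-eq-con}) when $j_\varrho = j^*$, that $\varrho'(R) \subseteq \varrho(R)$ when $j^*=0$ or $R$ is defined, and $\varrho'(R) \supseteq \varrho(R)$ when $R$ is constrained. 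Taking $\bigcap$ (resp.\ $\bigcup$) over all $\varrho \in M_{j^*}$ yields $\varrho'(R) \subseteq \varrho_0(R)$ (resp.\ $\varrho'(R) \supseteq \varrho_0(R)$), which are exactly~(\ref{itm:rank-eq-def}) and~(\ref{itm:rank-eq-con}). This gives $\varrho' \sqsubseteq \varrho_0$, so $\varrho_0 = \bigsqcap M$, and hence $(\Delta,\sqsubseteq)$ is a complete lattice.
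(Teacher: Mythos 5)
Your proof is correct and follows essentially the same route as the paper: check well-definedness of $\bigsqcap M$ by recursion on rank, show it is a lower bound by taking for each $\varrho \in M$ the maximal $j$ with $\varrho \in M_j$ (equivalently, the least rank of disagreement), and show it is the greatest lower bound by taking the least rank at which a given lower bound $\varrho'$ disagrees with $\bigsqcap M$. If anything you are more careful than the paper at the crux of the greatest-lower-bound step: the paper passes directly from ``$\varrho' \sqsubseteq \varrho$ for all $\varrho \in M_j$'' to the componentwise inclusions at rank $j$ without justifying that the witness index for $\varrho' \sqsubseteq \varrho$ is at least $j$, whereas your explicit argument that $j_\varrho \geq j^*$ (and your handling of $M_{j^*} = \emptyset$) closes that small gap.
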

\begin{proof}
See Appendix \ref{proof:lemma:complete-lattice}.\qed
\end{proof}

Note that $\bigsqcap M$ is well defined by induction on $j$ observing that
$M_0=M$ and $M_j \subseteq M_{j-1}$.

\begin{proposition}\label{prop:moore-family}
  Assume $cls$ is a stratified LFP formula, $\varsigma_0$ and
  $\zeta_0$ are interpretations of the free variables and function
  symbols in $cls$, respectively. Furthermore, $\varrho_0$ is an
  interpretation of all relations of rank 0. Then $\{ \varrho \mid
  (\varrho, \zeta_0, \varsigma_0) \models cls \wedge \forall R:
  rank(R) = 0 \Rightarrow \varrho(R) \supseteq \varrho_0(R) \}$ is a
  Moore family.
\end{proposition}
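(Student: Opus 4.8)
The plan is to show that $Y=\{\varrho\mid(\varrho,\zeta_0,\varsigma_0)\models cls\text{ and }\varrho(R)\supseteq\varrho_0(R)\text{ for all }R\text{ with }rank(R)=0\}$ is closed under the greatest lower bound of the complete lattice $(\Delta,\sqsubseteq)$ supplied by Lemma~\ref{lemma:complete-lattice}. So I would fix an arbitrary $M\subseteq Y$, let $\varrho^\star=\bigsqcap M$ be given by the explicit formula of Lemma~\ref{lemma:complete-lattice}, and check that $\varrho^\star\in Y$. The rank-$0$ side condition is immediate: since $M_0=M$, for every $R$ with $rank(R)=0$ we have $\varrho^\star(R)=\bigcap\{\varrho(R)\mid\varrho\in M\}\supseteq\varrho_0(R)$ because each member of $M$ satisfies this inclusion (and if $M=\emptyset$ then $\varrho^\star(R)=\U^k\supseteq\varrho_0(R)$). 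Hence it remains to prove $(\varrho^\star,\zeta_0,\varsigma_0)\models cl_j$ for every layer $j=1,\dots,s$, which I would treat one layer at a time, splitting on whether $cl_j$ is $\define(\Def)$ or $\constrain(\Con)$.

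The heart of the argument is a condition-transfer lemma: for every $\varrho\in M_j$, every condition $\Cond$ occurring in $cl_j$, and every environment $\varsigma$, if $cl_j$ is a define clause then $(\varrho^\star,\varsigma)\models\Cond$ implies $(\varrho,\varsigma)\models\Cond$, and dually if $cl_j$ is a constrain clause then $(\varrho,\varsigma)\models\Cond$ implies $(\varrho^\star,\varsigma)\models\Cond$. Both are proved by structural induction on $\Cond$, the only non-routine cases being the atomic queries. For a negative query $\neg R'(\vec x)$, stratification forces $rank(R')<j$, whence $\varrho(R')=\varrho^\star(R')$ by the definition of $M_j$, so both implications hold. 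For a positive query $R'(\vec x)$, stratification forces $rank(R')\le j$; if $rank(R')<j$ we again have $\varrho(R')=\varrho^\star(R')$, and if $rank(R')=j$ then $R'$ is asserted in $cl_j$, hence defined when $cl_j$ is a define clause — so $\varrho^\star(R')=\bigcap\{\varrho''(R')\mid\varrho''\in M_j\}\subseteq\varrho(R')$ — and constrained when $cl_j$ is a constrain clause — so $\varrho^\star(R')=\bigcup\{\varrho''(R')\mid\varrho''\in M_j\}\supseteq\varrho(R')$; in each case the inclusion runs in exactly the direction needed. The Boolean and quantifier cases merely propagate the induction hypothesis.

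Given the lemma, the two layer cases close as follows. If $cl_j=\define(\Def)$, a simple induction on the structure of $\Def$ (unfolding $\wedge$ and $\forall$) reduces the goal to each atomic piece $\Cond\Rightarrow R(\vec u)$ under an arbitrary $\varsigma$: assuming $(\varrho^\star,\varsigma)\models\Cond$, the condition-transfer lemma gives $(\varrho,\varsigma)\models\Cond$ for every $\varrho\in M_j$, and since each such $\varrho$ satisfies $cl_j$ we obtain $\sem{\vec u}(\zeta_0,\varsigma)\in\varrho(R)$; as $R$ is defined of rank $j$, intersecting over $\varrho\in M_j$ yields $\sem{\vec u}(\zeta_0,\varsigma)\in\varrho^\star(R)$, i.e.\ $(\varrho^\star,\zeta_0,\varsigma)\models R(\vec u)$ (if $M_j=\emptyset$ this is trivial since then $\varrho^\star(R)=\U^k$). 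If $cl_j=\constrain(\Con)$, the argument is dual: from $(\varrho^\star,\zeta_0,\varsigma)\models R(\vec u)$, with $R$ constrained of rank $j$ and thus $\varrho^\star(R)=\bigcup\{\varrho(R)\mid\varrho\in M_j\}$, pick $\varrho\in M_j$ with $\sem{\vec u}(\zeta_0,\varsigma)\in\varrho(R)$; then $\varrho\models cl_j$ gives $(\varrho,\varsigma)\models\Cond$, and the condition-transfer lemma lifts this to $(\varrho^\star,\varsigma)\models\Cond$. The case $M=\emptyset$, where $\varrho^\star=\top$, is subsumed by the same computation, all the ``for all $\varrho\in M_j$'' quantifications becoming vacuous.

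The main obstacle I anticipate is the bookkeeping in the condition-transfer lemma: keeping the two monotonicity directions in lock-step with the define/constrain distinction, and verifying that the three stratification clauses (positive uses have rank $\le j$, negative uses have rank $<j$, and a rank-$j$ relation inherits its defined/constrained status from $cl_j$) are precisely what make each atomic case go through, in harmony with the ``intersection for defined, union for constrained'' shape of the greatest lower bound in Lemma~\ref{lemma:complete-lattice}.
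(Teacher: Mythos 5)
Your proposal is correct and follows essentially the same route as the paper: your condition-transfer lemma is the paper's Lemma~\ref{lemma:cond} (monotonicity of conditions under the layer-$j$ ordering $\subseteq_{/j}$, applied in one direction for define clauses and the reverse for constrain clauses), and your layer-by-layer case split on $\define$/$\constrain$ is the paper's Lemma~\ref{lemma:cl}. If anything you are slightly more careful than the paper, which glosses over the rank-$0$ side condition and the $M=\emptyset$ case and is sloppy about $M$ versus $M_j$ in the intersections and unions.
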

\begin{proof}
See Appendix \ref{proof:prop:moore-family}.\qed
\end{proof}

The result ensures that the approach falls within the framework of
Abstract Interpretation \cite{bib:cousot77,bib:cousot79}; hence we can
be sure that there always is a single best solution for the analysis
problem under consideration, namely the one defined in Proposition
\ref{prop:moore-family}.

\paragraph{Complexity.}
The least model for LFP formulae guaranteed by Proposition
\ref{prop:moore-family} can be computed efficiently as summarized in
the following result.
\begin{proposition}\label{prop:complexity}
  For a finite universe $\U$, the best solution $\varrho$ such that
  $\varrho_0 \sqsubseteq \varrho$ of a LFP formula $cl_1, \ldots,
  cl_s$ (w.r.t. an interpretation of the constant symbols) can be
  computed in time
\[
\mathcal{O}(|\varrho_0| + \sum_{1\leq i \leq s} |cl_i||\mathcal{U}|^{k_i})
\]
where $k_i$ is the maximal nesting depth of quantifiers in the $cl_i$
and $|\varrho_0|$ is the sum of cardinalities of predicates
$\varrho_0(R)$ of rank $0$. We also assume unit time hash table
operations (as in \cite{bib:complex}).
\end{proposition}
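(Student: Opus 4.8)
The plan is to exhibit a solver that scans the strata $cl_1,\ldots,cl_s$ in increasing order and, within each stratum, runs a worklist-driven fixed point iteration, and then to bound the running time stratum by stratum. I would start with a linear-time normalisation and indexing phase: rewrite every body (a $\Cond$, a $\Def$ or a $\Con$) into a canonical nesting of quantifiers, conjunctions/disjunctions and possibly negated atoms, and, for each predicate symbol, record all occurrences of that symbol inside bodies together with the list of quantifiers enclosing each occurrence. This preprocessing costs $\mathcal{O}(\sum_i |cl_i|)$, so it is absorbed into the stated bound, and it is what lets an update to a predicate trigger re-evaluation of only the body fragments that actually mention it (the ``semi-naive'' idea). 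Terms are evaluated under $\zeta_0$, which is an $\mathcal{O}(1)$ lookup per occurrence and binding under the unit-cost hash-table assumption.

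Next I would treat a single $\define(\Def)$ layer $cl_i$ in isolation. By Definition~\ref{def:stratification}, every relation queried positively in $cl_i$ has rank $\le i$ and every relation queried negatively has rank $< i$; hence, by the time $cl_i$ is processed, all negated atoms are already fully determined, $\Def$ is monotone in the rank-$i$ relations, and the value we must assign to them is the least solution, which exists by Proposition~\ref{prop:moore-family}. I would compute it by propagation: keep each $\varrho(R)$ as a hash set, maintain a worklist of freshly derived tuples, and when a tuple is popped use the index to re-evaluate exactly the body fragments mentioning that tuple's predicate, enumerating the remaining enclosing-quantified variables and inserting any newly implied defined tuples. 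A $\constrain(\Con)$ layer is handled dually: initialise each constrained rank-$i$ relation to its top value (tracked via the finite set of tuples that have been removed from it, so nothing of super-polynomial size is ever materialised) and propagate \emph{removals} — deleting a tuple may make some condition $R(\vec u)\Rightarrow\Cond$ newly fail, forcing further deletions — until stabilisation, which yields the greatest solution of that stratum, again by Proposition~\ref{prop:moore-family}. Correctness of the whole run follows by induction on $i$, since the lexicographic order $\sqsubseteq$ pins down all lower strata before the current one.

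For the time bound I would argue that the total work inside stratum $i$ is, up to a constant factor, the number of pairs consisting of a sub-formula occurrence $\phi$ of $cl_i$ together with an assignment to the variables bound by the quantifiers enclosing $\phi$ (the finitely many globally free variables are treated as constants via $\zeta_0,\varsigma_0$). Since the quantifier nesting depth of $cl_i$ is at most $k_i$, there are at most $|\U|^{k_i}$ such assignments per occurrence and at most $|cl_i|$ occurrences, so the cost of the stratum is $\mathcal{O}(|cl_i|\,|\U|^{k_i})$; summing over $i$ and adding $\mathcal{O}(|\varrho_0|)$ for loading the rank-$0$ tuples gives the claim, each sub-formula evaluation being $\mathcal{O}(1)$.

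The step I expect to be the real obstacle is precisely this last counting argument: showing that the index-driven re-evaluation examines every (sub-formula, assignment) pair only a bounded number of times, rather than once per outer iteration of the fixed point. The key fact to make precise is that a tuple enters a defined relation (resp. leaves a constrained relation) at most once, and that each such event causes a given body fragment under a given binding to be re-examined at most once, so the total re-evaluation count telescopes to the number of potential firings, $\mathcal{O}(|cl_i|\,|\U|^{k_i})$ — exactly the amortisation behind the analogous complexity results for Datalog and ALFP. One must also check that disjunctions and the alternation between least-fixed-point ($\define$) and greatest-fixed-point ($\constrain$) strata do not break the amortisation, but stratification makes the cross-stratum dependencies acyclic, so the bookkeeping decomposes cleanly and the argument goes through one stratum at a time.
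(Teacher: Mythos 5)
Your proof is correct in outline but takes a genuinely different route from the paper's. The paper does not run a fixed point iteration at all: for each $\define(\Def)$ layer it \emph{grounds out} the formula, replacing every quantifier by an explicit conjunction or disjunction over all $|\U|$ instantiations to obtain a variable-free formula of size $\O(|\U|^{k_i}|cl_i|)$, then eliminates disjunctions in bodies by introducing fresh nullary predicates (with a weighted-cost argument showing this rewriting terminates with only a constant-factor blow-up), and finally hands the resulting alternation-free Boolean equation system to a known linear-time solver; $\constrain(\Con)$ layers are handled by an explicit syntactic dualization into $\define$ clauses over complement predicates $R^\complement$, rather than by your deletion propagation. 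So the paper's bound is "size of the grounded system, solved in linear time," whereas yours is "number of (sub-formula occurrence, binding) pairs, each touched $\O(1)$ times by a semi-naive worklist." The two bounds coincide, but the burdens differ: the paper's reduction is easier to verify because all the work is delegated to a cited linear-time result, at the price of actually materializing the $|\U|^{k_i}$-size grounding; your approach is closer to how the paper's actual solver operates (top-down, demand-driven, never grounding) but rests on the amortisation argument you correctly single out as the crux --- that each tuple is inserted into a defined relation (or deleted from a constrained one) at most once and triggers each affected fragment-and-binding at most a bounded number of times. That step is standard (it is the McAllester-style counting used for Datalog and ALFP) but you would need to fix a canonical variable-binding order for re-evaluating conjunctions to make "re-evaluate exactly the body fragments mentioning that tuple's predicate" cost no more than the claimed count; as written this is asserted rather than proved, though it is at a level of detail comparable to the paper's own sketch.
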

\begin{proof}
See Appendix \ref{proof:prop:complexity}.\qed
\end{proof}

For {\it define} clauses a straightforward method that achieves the
above complexity proceeds by instantiating all variables occurring in
the input formula in all possible ways. The resulting formula has no
free variables thus it can be solved by classical solvers for
alternation-free Boolean equation systems \cite{bib:hornsat} in linear
time.

In case of {\it constrain} clauses we first dualize the problem by
transforming the co-inductive specification into the inductive
one. The transformation increases the size of the input formula by a
constant factor. Thereafter, we proceed in the same way as for the
define clauses.

In addition we need to take into account the number of known facts,
which equals to the cardinality of all predicates of rank $0$. As a
result we get the complexity from Proposition \ref{prop:complexity}.

\paragraph{The solver.}
We developed a state-of-the-art solver for LFP, which is implemented
in continuation passing style using Haskell. The solver computes the
least model guaranteed by Proposition \ref{prop:moore-family} and has
a worst case time complexity as given by Proposition
\ref{prop:complexity}. For many clauses it exhibits a running time
substantially lower than the worst case time complexity. Indeed,
\cite{bib:complex} gives a formula estimating the less than worst case
time complexity on a given clause.

The solver deals with stratification by computing the relations in
increasing order on their rank and therefore the negations present no
obstacles. The relations are represented as Ordered Binary Decision
Diagrams (OBDDs), which were originally used in hardware
verification. OBDDs can efficiently store a large number of states
that share many commonalities \cite{bib:obdds,bib:bryant86}, and have
already been used in a number of program analyses proving to be very
efficient. The algorithm is an extension of the symbolic algorithm
presented in \cite{bib:piotr-ss} and is based on the top-down solving
approach of Le Charlier and van Hentenryck \cite{bib:LeCharlier92}. 

The solver automatically translates LFP formulae into highly efficient
OBDD implementations. Since the OBDDs represent sets of tuples, the
solver operates on entire relations at a time, rather than individual
tuples. The cost of the OBDD operations depends on the size of the
OBDD and not the number of tuples in the relation; hence dense
relations can be computed efficiently as long as their encoded
representations are compact.



\section{Application to Data Flow Analysis}
\label{sec:sa}
Datalog has already been used for program analysis in compilers
\cite{bib:pabdd,bib:reps93,bib:ullman89}. In this section we present
how the LFP logic can be used to specify analyses that are instances
of Bit-Vector Frameworks, which are a special case of the Monotone
Frameworks \cite{bib:ppa,bib:mf}.

A Monotone Framework consists of (a) a property space that usually is
a complete lattice $L$ satisfying the Ascending Chain Condition, and
(b) transfer functions, i.e.~monotone functions from $L$ to
$L$. The property space is used to represent the data flow
information, whereas transfer functions capture the behavior of
actions. In the Bit-Vector Framework, the property space is a power
set of some finite set and all transfer functions are of the form
$f_{n}(x)=(x \setminus kill_n) \cup gen_n$.

Throughout the section we assume that a program is represented as a
control flow graph \cite{bib:kildall,bib:ppa}, which is a directed
graph with one entry node (having no incoming edges) and one exit node
(having no outgoing edges), called extremal nodes. The remaining nodes
represent statements and have transfer functions associated with them.

\paragraph{Backward may analyses.}
Let us first consider backward may analyses expressed as an instance
of the Monotone Frameworks. In the analyses, we require the least sets
that solve the equations and we are able to detect properties
satisfied by at least one path leading to the given node. The analyses
use the reversed edges in the flow graph; hence the data flow
information is propagated {\it against} the flow of the program
starting at the exit node. The data flow equations are defined as
follows
$$
\begin{array}{l}
A(n)=
\left\{
\begin{array}{ll}
\iota & \text{if } n=n_{exit} \\
\bigcup \{ f_n(A(n') \mid (n,n') \in E \} & \text{otherwise} 
\end{array}\right.
\end{array}
$$
where $A(n)$ represents data flow information at the entry to the node
$n$, $E$ is a set of edges in the control flow graph, and $\iota$ is
the initial analysis information. The first case in the above
equation, initializes the exit node with the initial analysis
information, whereas the second one joins the data flow information
from different paths (using the revered flow). We use $\bigcup$ since
we want be able detect properties satisfied by at least one path
leading to the given node.

The LFP specification for backward may analyses consists of two
conjuncts corresponding to two cases in the data flow
equations. Since in case of may analyses we aim at computing the least
solution, the specification is defined in terms of a {\it define}
clause. The formula is obtained as
$$
\begin{array}{l}
\define
\left(
\begin{array}{c}
\forall x: \iota(x) \Rightarrow A(n_{exit},x) \\
\bigwedge_{(s,t)\in E} \forall x: (A(t,x) \wedge \neg kill_s(x)) \vee gen_s(x) \Rightarrow A(s,x)
\end{array}\right)
\end{array}
$$
The first conjunct initializes the exit node with initial analysis
information, denoted by the predicate $\iota$. The second one
propagates data flow information agains the edges in the control flow
graph, i.e.~whenever we have an edge $(s,t)$ in the control flow
graph, we propagate data flow information from $t$ to $s$, by applying
the corresponding transfer function.

Notice that there is no explicit formula for joining analysis
information from different paths, as it is the case in the data flow
equations, but rather it is done implicitly. Suppose there are two
distinct edges $(s,p)$ and $(s,q)$ in the flow graph, then we get
$$
\begin{array}{l}
\forall x: \underbrace{(A(p,x) \wedge \neg kill_{s}(x)) \vee gen_{s}(x)}_{\Cond_p(x)}
\Rightarrow A(s,x)\\
\forall x: \underbrace{(A(q,x) \wedge \neg kill_{s}(x)) \vee gen_{s}(x)}_{\Cond_q(x)}
\Rightarrow A(s,x)
\end{array}
$$
which is equivalent to
$$
\forall x: \Cond_p(x) \vee \Cond_q(x) \Rightarrow A(s,x)
$$

\paragraph{Forward must analyses.}
Let us now consider the general pattern for defining forward must
analyses. Here we require the largest sets that solve the equations
and we are able to detect properties satisfied by all paths leading to
a given node. The analyses propagate the data flow information along
the edges of the flow graph starting at the entry node. The data flow
equations are defined as follows
$$
\begin{array}{l}
A(n)=
\left\{
\begin{array}{ll}
\iota & \text{if } n=n_{entry} \\
\bigcap \{ f_n(A(n')) \mid (n',n) \in E \} & \text{otherwise} 
\end{array}\right.
\end{array}
$$
where $A(n)$ represents analysis information at the exit from the node
$n$. Since we require the greatest solution, the greatest lower bound
$\bigcap$ is used to combine information from different paths.

The corresponding LFP specification is obtained as follows
$$
\begin{array}{l}
\constrain
\left(
\begin{array}{c}
\forall x: A(n_{entry},x) \Rightarrow \iota(x) \\
\bigwedge_{(s,t)\in E} \forall x: A(t,x) \Rightarrow (A(s,x) \wedge \neg kill_t(x)) \vee gen_t(x)
\end{array}\right)
\end{array}
$$
Since we aim at computing the greatest solution, the analysis is given
by means of {\it constrain} clause. The first conjunct initializes the
entry node with the initial analysis information, whereas the second
one propagates the information along the edges in the control flow
graph, i.e.~whenever we have an edge $(s,t)$ in the control flow
graph, we propagate data flow information from $s$ to $t$, by applying
the corresponding transfer function.

The general patterns for defining forward may and backward must
analyses follow similar pattern. In case of forward may analyses the
data flow information is propagated along the edges of the flow graph
and since we aim at computing the least solution, the analyses are
given by means of {\it define} clauses. Backward must analyses, on the
other hand, use reversed edges in the flow graph and are specified
using {\it constrain} clauses.

In order to compute the least solution of the data flow equations, one
can use a general iterative algorithm for Monotone Frameworks. The
worst case complexity of the algorithm is $\O(|E|h)$, where $|E|$ is
the number of edges in the control flow graph, and $h$ is the height
of the underlying lattice \cite{bib:ppa}. For Bit-Vector Frameworks
the lattice is a powerset of a finite set $\U$; hence $h$ is
$\O(|\U|)$. This gives the complexity $\O(|E||\U|)$.

According to Proposition \ref{prop:complexity} the worst case time
complexity of the LFP specification is $\O(|\varrho_0|+\sum_{1\leq i
  \leq |E|}|\U||cl_i|)$. Since the size of the clause $cl_i$ is
constant and the sum of cardinalities of predicates of rank $0$ is
$\O(|N|)$ we get $\O(|N| + |E||\U|)$. Provided that $|E|>|N|$ we
achieve $\O(|E||\U|)$ i.e.~the same worst case complexity as the
standard iterative algorithm.

It is common in the compiler optimization that various analyses are
preformed at the same time. Since LFP logic has direct support for
both least fixed points and greatest fixed points, we can perform both
may and must analyses at the same time by splitting the analyses into
separate layers.

\section{Application to Constraint Satisfaction}
\label{sec:csp}
Arc consistency is a basic technique for solving Constraint
Satisfaction Problems (CSP) and has various applications within
e.g. Artificial Intelligence. Formally a CSP
\cite{bib:Mackworth77,bib:YuanlinY01} problem can be defined as
follows.
\begin{definition}
  A Constraint Satisfaction Problem $(N, D, C)$ consists of a finite
  set of variables $N = \{ x_1,\ldots, x_n \}$, a set of domains $D =
  \{ D_1,\ldots,D_n \}$, where $x_i$ ranges over $D_i$, and a set of
  constraints $C \subseteq \{ c_{ij} \mid i,j \in N \}$, where each
  constraint $c_{ij}$ is a binary relation between variables $x_i$ and
  $x_j$.
\end{definition}

For simplicity we consider binary constraints only. Furthermore, we
can represent a CSP problem as a directed graph in the following way.

\begin{definition}
  A constraint graph of a CSP problem $(N,D,C)$ is a directed graph
  $G=(V,E)$ where $V=N$ and $E=\{ (x_i,x_j) \mid c_{ij} \in C \}$.
\end{definition}

Thus vertices of the graph correspond to the variables and an edge in
the graph between nodes $x_i$ and $x_j$ corresponds to the constraint
$c_{ij} \in C$.


The arc consistency problem is formally stated in the following
definition.

\begin{definition}\label{def:csp}
  Given a CSP $(N, D, C)$, an arc $(x_i, x_j)$ of its constraint graph
  is arc consistent if and only if $\forall x \in D_i$, there exists
  $y \in D_j$ such that $c_{ij}(x,y)$ holds, as well as $\forall y \in
  D_j$, there exists $x \in D_i$ such that $c_{ij}(x,y)$ holds. A CSP
  $(N,D,C)$ is arc consistent if and only if each arc in its
  constraint graph is arc consistent.
\end{definition}

The basic and widely used arc consistency algorithm is the AC-3
algorithm proposed in 1977 by Mackworth \cite{bib:Mackworth77}. The
complexity of the algorithm is $O(ed^3)$, where $e$ is the number of
constraints and $d$ the size of the largest domain. The algorithm is
used in many constrains solvers due to its simplicity and fairly good
efficiency \cite{bib:wallace93}.

Now we show the LFP specification of the arc consistency problem. A
domain of a variable $x_i$ is represented as a unary relation $D_i$, and
for each constraint $c_{ij} \in C$ we have a binary relation $C_{ij}
\subseteq D_i \times D_j$. Then we obtain
$$
\begin{array}{l}
\constrain
\left( \bigwedge_{c_{ij} \in C}
\begin{array}{l}
  (\forall x: D_i(x) \Rightarrow \exists y: D_j(y) \wedge
  C_{ij}(x,y)) \wedge \\
  (\forall y: D_j(y) \Rightarrow \exists x: D_i(x) \wedge
  C_{ij}(x,y)) \\
\end{array}
\right)
\end{array}
$$
which exactly captures the conditions from Definition \ref{def:csp}.

According to the Proposition \ref{prop:complexity} the above
specification gives rise to the worst case complexity
$\mathcal{O}(ed^2)$. The original AC-3 algorithm was optimized in
\cite{bib:YuanlinY01} where it was shown that it achieves the worst
case optimal time complexity of $\mathcal{O}(ed^2)$. Hence LFP
specification is as efficient as the improved version of the AC-3
algorithm.

\begin{example}
  As an example let us consider the following problem. Assume we have
  two processes $P_1$ and $P_2$ that need to be finished before 8 time
  units have elapsed. The process $P_1$ is required to run for 3 or 4
  time units, the process $P_2$ is required to run for precisely 2
  time units, and $P_2$ should start at the exact moment when $P_1$
  finishes.

  \begin{figure}
  \begin{center}
    \hfill \xymatrix{ *+[o][F-]{s_1} \ar@(l,u)[]^>{c_{11}}
      \ar[rr]^{c_{12}} & & *+[o][F-]{s_2} \ar@(r,u)[]_>{c_{22}} }
    \hfill\mbox{}
  \end{center}
  \caption{Arc consistency.}
  \label{fig:csp}
  \end{figure}
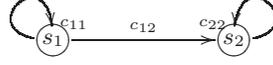

  The problem can be defined as an instance of CSP $(N,D,C)$ where
  $N=\{ s_1, s_2 \}$ denoting the starting times of the corresponding
  process. Since both processes need to be completed before 8 time
  units have elapsed we have $D_1=D_2=\{ 0, \ldots, 8 \}$. Moreover,
  we have the following constrains $C= \{ c_{12} = (3 \leq s_2 -s_1
  \leq 4), c_{11} = (0 \leq s_1 \leq 4), c_{22} = (0 \leq s_2 \leq 6)
  \}$. We can represent the above CSP problem as a constraint graph
  depicted in Figure \ref{fig:csp}. Furthermore it can be specified as
  the following LFP formulae
$$
\begin{array}{l}
\define
\left(
\begin{array}{l}
\bigwedge_{0 \leq x \leq 4} C_1(x) \wedge \bigwedge_{0 \leq y \leq 6}
C_2(y) \wedge \bigwedge_{3 \leq z \leq 4} C_{12}(z)
\end{array}
\right),\\
\constrain
\left(
\begin{array}{l}
  (\forall x: D_1(x) \Rightarrow \exists y: D_2(y) \wedge
  C_{12}(y-x)) \wedge \\
  (\forall y: D_2(y) \Rightarrow \exists x: D_1(x) \wedge
  C_{12}(y-x))
\end{array}
\right)
\end{array}
$$
where we write $y-x$ for a function $\mbox{\it f}_{sub}(y,x)$.
\end{example}

\section{Application to Model Checking}
\label{sec:mc}
This section is concerned with the application of the LFP logic to the
model checking problem \cite{bib:pmc}. In particular we show how LFP
can be used to specify a prototype model checker for a special purpose
modal logic of interest. Here we illustrate the approach on the
familiar case of Computation Tree Logic (CTL)
\cite{bib:ctl}. Throughout this section, we assume that $TS$ is finite
and has no terminal states.

CTL distinguishes between state formulae and path formulae. CTL state
formulae over the set $AP$ of atomic propositions are formed according
to the following grammar
$$
\Phi ::= true \mid a \mid \Phi_1 \wedge \Phi_2 \mid \neg \Phi \mid
\E \varphi \mid \A \varphi
$$
where $a \in AP$ and $\varphi$ is a path formula. CTL path formulae are
formed according to the following grammar
$$
\varphi ::= \X \Phi \mid \Phi_1 \Until \Phi_2 \mid \G \Phi
$$
where $\Phi$, $\Phi_1$ and $\Phi_2$ are state formulae. The
satisfaction relation $\models$ is defined for state formula by
$$
\begin{array}{lcl}
s \models \textbf{true} & \underline{\texttt{iff}} & true \\

s \models a & \underline{\texttt{iff}} & a \in L(s) \\

s \models \neg \Phi & \underline{\texttt{iff}} & \mbox{not } s \models
\Phi \\

s \models \Phi_1 \wedge \Phi_2 & \underline{\texttt{iff}} & s \models
\Phi_1 \mbox{ and } s \models \Phi_2 \\

s \models \E \varphi & \underline{\texttt{iff}} & \pi \models \varphi \mbox{
for some } \pi \in \Paths(s) \\

s \models \A \varphi & \underline{\texttt{iff}} & \pi \models \varphi \mbox{
for all } \pi \in \Paths(s) \\
\end{array}
$$
where $Paths(s)$ denote the set of maximal path fragments $\pi$
starting in $s$. The satisfaction relation $\models$ for path formulae
is defined by
$$
\begin{array}{lcl}
\pi \models \X \Phi & \underline{\texttt{iff}} & \pi[1] \models \Phi
\\

\pi \models \Phi_1 \Until \Phi_2 & \underline{\texttt{iff}} & \exists j
\geq 0: (\pi[j] \models \Phi_2 \wedge (\forall 0 \leq k < j: \pi[k]
\models \Phi_1)) \\

\pi \models \G \Phi & \underline{\texttt{iff}} & \forall j \geq 0:
\pi[j] \models \Phi
\end{array}
$$
where for path $\pi=s_0 s_1\ldots$ and an integer $i\geq0$, $\pi[i]$
denotes the $(i+1)$th state of $\pi$, i.e.~$\pi[i]=s_i$.

The CTL model checking amounts to a recursive computation of the set
$Sat(\Phi)$ of all states satisfying $\Phi$, which is sometimes
referred to as {\it global} model checking. The algorithm boils down
to a bottom-up traversal of the abstract syntax tree of the CTL
formula $\Phi$. The nodes of the abstract syntax tree correspond to
the sub-formulae of $\Phi$, and leaves are either a constant $\true$
or an atomic proposition $a \in AP$.

\begin{table}
\caption{LFP specification of satisfaction sets}
$$
\begin{array}{l}
\define(\forall s: Sat_{\true}(s))\\
\define(\forall s: L_a(s) \Rightarrow Sat_a(s)) \\
\define(\forall s: Sat_{\Phi_1}(s) \wedge Sat_{\Phi_2}(s) \Rightarrow
Sat_{\Phi_1 \wedge \Phi_2}(s)) \\
\define(\forall s: \neg Sat_{\Phi}(s) \Rightarrow Sat_{\neg
  \Phi}(s)) \\ \\

\define(\forall s: (\exists s':T(s,s')\wedge Sat_{\Phi}(s')) \Rightarrow
Sat_{\textbf{EX}\Phi}(s)) \\ \\

\define(\forall s: (\forall s': \neg T(s,s') \vee Sat_{\Phi}(s')) \Rightarrow
Sat_{\textbf{AX}\Phi}(s)) \\ \\

\define
\left(
\begin{array}{l}
  (\forall s: Sat_{\Phi_2}(s) \Rightarrow
  Sat_{\textbf{E}[\Phi_1\textbf{U}\Phi_2]}(s)) \wedge \\
  (\forall s: Sat_{\Phi_1}(s) \wedge (\exists s': T(s,s') \wedge
  Sat_{\textbf{E}[\Phi_1\textbf{U}\Phi_2]}(s')) \Rightarrow
  Sat_{\textbf{E}[\Phi_1\textbf{U}\Phi_2]}(s))
\end{array}
\right)
\\ \\

\define
\left(
\begin{array}{l}
(\forall s: Sat_{\Phi_2}(s) \Rightarrow
Sat_{\textbf{A}[\Phi_1\textbf{U}\Phi_2]}(s)) \wedge \\
(\forall s: Sat_{\Phi_1}(s) \wedge (\forall s': \neg T(s,s') \vee 
Sat_{\textbf{A}[\Phi_1\textbf{U}\Phi_2]}(s')) \Rightarrow
Sat_{\textbf{A}[\Phi_1\textbf{U}\Phi_2]}(s))
\end{array}
\right)
\\ \\

\constrain
\left(
\begin{array}{l}
(\forall s: Sat_{\textbf{EG}\Phi}(s) \Rightarrow Sat_{\Phi}(s)) \wedge \\
(\forall s: Sat_{\textbf{EG}\Phi}(s) \Rightarrow (\exists
 s': T(s,s') \wedge Sat_{\textbf{EG}\Phi}(s')))
\end{array}
\right)\\ \\

\constrain
\left(
\begin{array}{l}
(\forall s: Sat_{\textbf{AG}\Phi}(s) \Rightarrow Sat_{\Phi}(s)) \wedge \\
(\forall s: Sat_{\textbf{AG}\Phi}(s) \Rightarrow (\forall
 s': \neg T(s,s') \vee Sat_{\textbf{AG}\Phi}(s')))
\end{array}
\right)
\end{array}
$$
\label{tab:sat-sets-lfp}
\end{table}

Now let us consider the LFP specification, where for each formula
$\Phi$ we define a relation $Sat_{\Phi} \subseteq S$ characterizing
states where $\Phi$ hold. The specification is defined in Table
\ref{tab:sat-sets-lfp}. The clause for $\true$ is straightforward and
says that $\true$ holds in all states. The clause for an atomic
proposition $a$ expresses that a state satisfies $a$ whenever it is in
$L_a$, where we assume that we have a predicate $L_a \subseteq S$ for
each $a \in AP$. The clause for $\Phi_1 \wedge \Phi_2$ captures that a
state satisfies $\Phi_1 \wedge \Phi_2$ whenever it satisfies both
$\Phi_1$ and $\Phi_2$. Similarly a state satisfies $\neg \Phi$ if it
does not satisfy $\Phi$.
The formula for ${\bf EX}\Phi$ captures that a state $s$ satisfies
${\bf EX}\Phi$, if there is a transition to state $s'$ such that $s'$
satisfies $\Phi$. The formula for ${\bf AX}\Phi$ expresses that a
state $s$ satisfies $\AX\Phi$ if for all states $s'$: either there is
no transition from $s$ to $s'$, or otherwise $s'$ satisfies $\Phi$.
The formula for $\textbf{E}[\Phi_1\textbf{U}\Phi_2]$ captures two
possibilities. If a state satisfies $\Phi_2$ then it also satisfies
$\textbf{E}[\Phi_1\textbf{U}\Phi_2]$. Alternatively if the state $s$
satisfies $\Phi_1$ and there is a transition to a state satisfying
$\textbf{E}[\Phi_1\textbf{U}\Phi_2]$ then $s$ also satisfies
$\textbf{E}[\Phi_1\textbf{U}\Phi_2]$.
The formula $\textbf{A}[\Phi_1\textbf{U}\Phi_2]$ also captures two
cases. If a state satisfies $\Phi_2$ then it also satisfies
$\textbf{A}[\Phi_1\textbf{U}\Phi_2]$. Alternatively state $s$
satisfies $\textbf{A}[\Phi_1\textbf{U}\Phi_2]$ if it satisfies
$\Phi_1$ and for all states $s'$ either there is no transition from
$s$ to $s'$ or $\textbf{A}[\Phi_1\textbf{U}\Phi_2]$ is valid in $s'$.
Let us now consider the formula for $\textbf{EG}\Phi$. Since the set
of states satisfying $\textbf{EG}\Phi$ is defined as a largest set
satisfying the semantics of $\EG\Phi$, the property is defined by
means of constrain clause. The first conjunct expresses that whenever
a state satisfies $\textbf{EG}\Phi$ it also satisfies $\Phi$. The
second conjunct says that if a state satisfies $\textbf{EG}\Phi$ then
there exists a transition to a state $s'$ such that $s'$ satisfies
$\textbf{EG}\Phi$.
Finally let us consider the formula for $\AG\Phi$, which is also
defined in terms of constrain clause and distinguishes between two
cases. In the first one whenever a state satisfies $\AG\Phi$, it also
satisfies $\Phi$. Alternatively, if a state $s$ satisfies $\AG\Phi$
then for all states $s'$: either there is no transition from $s$ to
$s'$ or otherwise $s'$ satisfies $\AG\Phi$.

The generation of clauses for $Sat_{\Phi}$ is performed in the
postorder traversal over $\Phi$; hence the clauses defining
sub-formulas of $\Phi$ are defined in the lower layers. It is
important to note that the specification in Table
\ref{tab:sat-sets-lfp} is both correct and precise.  It follows that
an implementation of the given specification of CTL by means of the
LFP solver constitutes a model checker for CTL.

We may estimate the worst case time complexity of model checking
performed using LFP. Consider a CTL formula $\Phi$ of size $|\Phi|$;
it is immediate that the LFP clause has size $\O(|\Phi|)$, and the
nesting depth is at most 2. According to Proposition
\ref{prop:complexity} the worst case time complexity of the LFP
specification is $\O(|S| + |S|^2|\Phi|)$, where $|S|$ is the number of
states in the transition system. Using a more refined reasoning than
that of Proposition \ref{prop:complexity} we obtain $\O(|S| +
|T||\Phi|)$, where $|T|$ is the number of transitions in the transition
system. It is due to the fact that the "double quantifications'' over
states in Table \ref{tab:sat-sets-lfp} really correspond to traversing
all possible transitions rather than all pairs of states. Thus our LFP
model checking algorithm has the same worst case complexity as
classical model checking algorithms \cite{bib:pmc}.

\begin{example}
  As an example let us consider the Bakery mutual exclusion algorithm
  \cite{bib:bakery}. Although the algorithm is designed for an
  arbitrary number of processes, we consider the simpler setting with
  two processes. Let $P_1$ and $P_2$ be the two processes, and $x_1$
  and $x_2$ be two shared variables both initialized to $0$.  We can
  represent the algorithm as an interleaving of two program graphs
  \cite{bib:pmc}, which are directed graphs where actions label the
  edges rather than the nodes. The algorithm is as follows

\[
\begin{array}{l|||l}
\xymatrix{
*+[o][F-]{1} \ar[d]^{x_1:=x_2+1} \\
*+[o][F-]{2} \ar[d]^>{\quad x_2=0\vee x_1<x_2} 
\ar@(d,r)[]_>{\qquad\qquad\neg(x_2=0\vee x_1<x_2)}\\
*+[o][F-]{3} \ar @/^2pc/[uu]^{x_1:=0} }
&
\xymatrix{
*+[o][F-]{1} \ar[d]^{x_2:=x_1+1} \\
*+[o][F-]{2} \ar[d]^>{\quad x_1=0\vee x_2<x_1} 
\ar@(d,r)[]_>{\qquad\qquad\neg(x_1=0\vee x_2<x_1)}\\
*+[o][F-]{3} \ar @/^2pc/[uu]^{x_2:=0} }
\end{array}
\]

The variables $x_1$ and $x_2$ are used to resolve the conflict when
both processes want to enter the critical section. When $x_i$ is equal
to zero, the process $P_i$ is not in the critical section and does not
attempt to enter it --- the other one can safely proceed to the
critical section. Otherwise, if both shared variables are non-zero,
the process with smaller ``ticket'' (i.e. value of the corresponding
variable) can enter the critical section. This reasoning is captured
by the conditions of busy-waiting loops. When a process wants to enter
the critical section, it simply takes the next ``ticket'' hence giving
priority to the other process.

From the algorithm above, we can obtain a program graph corresponding
to the interleaving of the two processes, which is depicted in Figure
\ref{fig:interleaved-bakery}.
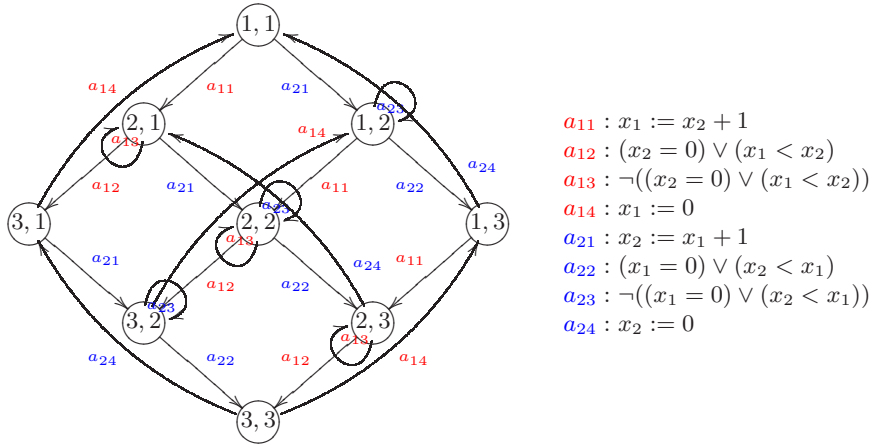
\begin{figure}
\begin{minipage}{0.6\textwidth}
$
\xymatrix{
&& *+[o][F-]{1,1}
\ar[dl]^{\red{a_{11}}}
\ar[dr]_{\blue{a_{21}}} 
&& 
\\
& *+[o][F-]{2,1}
\ar[dl]^{\red{a_{12}}}
\ar[dr]_{\blue{a_{21}}}
\ar@(d,l)[]_>{\red{a_{13}}}
&& *+[o][F-]{1,2}
\ar[dl]^{\red{a_{11}}}
\ar[dr]_{\blue{a_{22}}}
\ar@(u,r)[]_>{\blue{a_{23}}}
\\
*+[o][F-]{3,1}
\ar[dr]^{\blue{a_{21}}}
\ar@/^1pc/[uurr]^{\red{a_{14}}}
&& *+[o][F-]{2,2}
\ar[dl]^{\red{{a_{12}}}}
\ar[dr]_{\blue{{a_{22}}}}
\ar@(d,l)[]_>{\red{a_{13}}}
\ar@(u,r)[]_>{\blue{a_{23}}}
&& *+[o][F-]{1,3}
\ar[dl]_{\red{a_{11}}}
\ar@/_1pc/[uull]_<<<<<{\blue{a_{24}}}
\\
& *+[o][F-]{3,2}
\ar[dr]^{\blue{{a_{22}}}}
\ar@/^1pc/[uurr]^>>>{\red{a_{14}}}
\ar@(u,r)[]_>{\blue{a_{23}}}
&& *+[o][F-]{2,3}
\ar[dl]_{\red{{a_{12}}}}
\ar@/_1pc/[uull]_<<<<<{\blue{a_{24}}}
\ar@(d,l)[]_>{\red{a_{13}}}
\\
&& *+[o][F-]{3,3}
\ar@/^1pc/[uull]^{\blue{a_{24}}}
\ar@/_1pc/[uurr]_{\red{a_{14}}}
}
$
\end{minipage}
\begin{minipage}{0.4\textwidth}
$
\begin{array}{ll}
\red{a_{11}}: & x_1:=x_2+1\\
\red{a_{12}}: & (x_2=0) \vee (x_1<x_2)\\
\red{a_{13}}: & \neg((x_2=0) \vee (x_1<x_2))\\
\red{a_{14}}: & x_1:=0\\
\blue{a_{21}}: & x_2:=x_1+1\\
\blue{a_{22}}: & (x_1=0) \vee (x_2<x_1)\\
\blue{a_{23}}: & \neg((x_1=0) \vee (x_2<x_1))\\
\blue{a_{24}}: & x_2:=0
\end{array}
$
\end{minipage}
\caption{Interleaved program graph.}
\label{fig:interleaved-bakery}
\end{figure}

The CTL formulation of the mutual exclusion property is $AG \neg
(crit_1 \wedge crit_2)$, which states that along all paths globally it
is never the case that $crit_1$ and $crit_2$ hold at the same
time. 

As already mentioned, in order to specify the problem we proceed
bottom up by specifying formulae for the sub problems. After a bit of
simplification we obtain the following LFP clauses
$$
\begin{array}{l}
\define(\forall s: L_{crit_1}(s) \wedge L_{crit_2}(s) \Rightarrow
Sat_{crit}(s)), \\
\constrain
\left(
\begin{array}{l}
(\forall s: Sat_{AG( \neg crit)}(s) \Rightarrow \neg
Sat_{crit}(s)) \wedge \\
(\forall s: Sat_{AG( \neg crit)}(s) \Rightarrow (\forall s': \neg T(s,s')
\vee Sat_{AG( \neg crit)}(s')))
\end{array}
\right)
\end{array}
$$
where relation $L_{crit_1}$ (respectively $L_{crit_1}$) characterizes
states in the interleaved program graph that correspond to process
$P_1$ (respectively $P_2$) being in the critical section. Furthermore,
the $AG$ modality is defined by means of a constrain clause. The first
conjunct expresses that whenever a state satisfies a mutual exclusion
property $AG( \neg crit)$ it does not satisfy $crit$. The second one
states that if a state satisfies a mutual exclusion property then all
successors do as well, i.e.~for an arbitrary state, it is either not a
successor or else satisfies the mutual exclusion property.
\end{example}

\section{Conclusions}
\label{sec:conclusions}
In the paper we introduced the Layered Fixed Point Logic, which is a
suitable formalism for the specification of analysis problems. Its
most prominent feature is the direct support for both inductive as
well as co-inductive specifications of properties.

We established a Moore Family result that guarantees that there always
is a best solution for the LFP formulae. More generally this ensures
that the approach taken falls within the general Abstract
Interpretation framework. Other theoretical contribution is the
parametrized worst case time complexity result, which provide a simple
characterization of the running time of the LFP programs.

We developed a state-of-the-art solving algorithm for LFP, which is a
continuation passing style algorithm based on OBDD representations of
relations. The solver achieves the best known theoretical complexity
bounds, and for many clauses exhibit a running time substantially
lower than the worst case time complexity.

We showed that the logic and the associated solver can be used for
rapid prototyping by presenting applications within Static Analysis,
Constraint Satisfactions Problems and Model Checking. In all cases the
complexity result specializes to the worst case time complexity of
classical results.

\bibliographystyle{splncs03}
\bibliography{paper}

\appendix
\newpage
These appendices are not intended for publication and references to
them will be removed in the final version.

\section{Proof of Lemma \ref{lemma:partial-order}}\label{proof:lemma:partial-order}

\begin{proof}

{\bf Reflexivity} $\forall \varrho \in \Delta: \varrho \sqsubseteq \varrho$.

\noindent To show that $\varrho \sqsubseteq \varrho$ let us take $j =
s$. If $rank(R)<j$ then $\varrho(R)=\varrho(R)$ as required. Otherwise
if $rank(R)=j$ and either $R$ is a defined relation or $j=0$, then
form $\varrho(R)=\varrho(R)$ we get $\varrho(R) \subseteq
\varrho(R)$. The last case is when $rank(R)=j$ and $R$ is a
constrained relation. Then from $\varrho(R) = \varrho(R)$ we get
$\varrho(R) \supseteq \varrho(R)$. Thus we get the required $\varrho
\sqsubseteq \varrho$.

\noindent {\bf Transitivity} $\forall \varrho_1, \varrho_2, \varrho_3
\in \Delta: \varrho_1 \sqsubseteq \varrho_2 \wedge \varrho_2
\sqsubseteq \varrho_3 \Rightarrow \varrho_1 \sqsubseteq \varrho_3$.

\noindent Let us assume that $\varrho_1 \sqsubseteq \varrho_2 \wedge
\varrho_2 \sqsubseteq \varrho_3$. From $\varrho_i \sqsubseteq
\varrho_{i+1}$ we have $j_i$ such that conditions
(\ref{itm:rank-less})--(\ref{itm:rank-s}) are fulfilled for
$i=1,2$. Let us take $j$ to be the minimum of $j_1$ and $j_2$. Now we
need to verify that conditions
(\ref{itm:rank-less})--(\ref{itm:rank-s}) hold for $j$. If $rank(R)<j$
we have $\varrho_1(R) = \varrho_2(R)$ and $\varrho_2(R) =
\varrho_3(R)$. It follows that $\varrho_1(R) = \varrho_3(R)$, hence
(\ref{itm:rank-less}) holds. Now let us assume that $rank(R)=j$ and
either $R$ is a defined relation or $j=0$. We have $\varrho_1(R)
\subseteq \varrho_2(R)$ and $\varrho_2(R) \subseteq \varrho_3(R)$ and
from transitivity of $\subseteq$ we get $\varrho_1(R) \subseteq
\varrho_3(R)$, which gives (\ref{itm:rank-eq-def}). Alternatively
$rank(R)=j$ and $R$ is a constrained relation. We have $\varrho_1(R)
\supseteq \varrho_2(R)$ and $\varrho_2(R) \supseteq \varrho_3(R)$ and
from transitivity of $\supseteq$ we get $\varrho_1(R) \supseteq
\varrho_3(R)$, thus (\ref{itm:rank-eq-con}) holds. Let us now assume
that $j \neq s$, hence $\varrho_i(R) \neq \varrho_{i+1}(R)$ for some
$R \in \R$ and $i=1,2$. Without loss of generality let us assume that
$\varrho_1(R) \neq \varrho_2(R)$. In case $R$ is a defined relation we
have $\varrho_1(R) \subsetneq \varrho_2(R)$ and $\varrho_2(R)
\subseteq \varrho_3(R)$, hence $\varrho_1(R) \neq
\varrho_3(R)$. Similarly in case $R$ is a constrained relation we have
$\varrho_1(R) \supsetneq \varrho_2(R)$ and $\varrho_2(R) \supseteq
\varrho_3(R)$. Hence $\varrho_1(R) \neq \varrho_3(R)$, and
(\ref{itm:rank-s}) holds.

\noindent {\bf Anti-symmetry} $\forall \varrho_1, \varrho_2 \in \Delta: \varrho_1
\sqsubseteq \varrho_2 \wedge \varrho_2 \sqsubseteq \varrho_1
\Rightarrow \varrho_1 = \varrho_2$.

\noindent Let us assume $\varrho_1 \sqsubseteq \varrho_2$ and
$\varrho_2 \sqsubseteq \varrho_1$. Let $j$ be minimal such that
$rank(R)=j$ and $\varrho_1(R) \neq \varrho_2(R)$ for some $R \in
\R$. If $j=0$ or $R$ is a defined relation, then we have $\varrho_1(R)
\subseteq \varrho_2(R)$ and $\varrho_2(R) \subseteq \varrho_1(R)$.
Hence $\varrho_1(R) = \varrho_2(R)$ which is a
contradiction. Similarly if $R$ is a constrained relation we have
$\varrho_1(R) \supseteq \varrho_2(R)$ and $\varrho_2(R) \supseteq
\varrho_1(R)$. It follows that $\varrho_1(R) = \varrho_2(R)$, which
again is a contradiction. Thus it must be the case that $\varrho_1(R)
= \varrho_2(R)$ for all $R \in \R$.
\qed
\end{proof}

\section{Proof of Lemma \ref{lemma:complete-lattice}}\label{proof:lemma:complete-lattice}

\begin{proof}
  First we prove that $\bigsqcap M$ is a lower bound of $M$; that is
  $\bigsqcap M \sqsubseteq \varrho$ for all $\varrho \in M$. Let $j$
  be maximum such that $\varrho \in M_j$; since $M=M_0$ and $M_j
  \supseteq M_{j+1}$ clearly such $j$ exists. From definition of $M_j$
  it follows that $(\bigsqcap M)(R)=\varrho(R)$ for all $R$ with
  $rank(R)<j$; hence (\ref{itm:rank-less}) holds.

  \noindent If $rank(R)=j$ and either $R$ is a defined relation or
  $j=0$ we have $(\bigsqcap M)(R)= \bigcap \{ \varrho'(R) \mid
  \varrho' \in M_j \} \subseteq \varrho(R)$ showing that
  (\ref{itm:rank-eq-def}) holds.

  \noindent Similarly, if $R$ is a constrained relation with
  $rank(R)=j$ we have $(\bigsqcap M)(R)= \bigcup \{ \varrho'(R) \mid
  \varrho' \in M_j \} \supseteq \varrho(R)$ showing that
  (\ref{itm:rank-eq-con}) holds.

  \noindent Finally let us assume that $j \neq s$; we need to show
  that there is some $R$ with $rank(R)=j$ such that $(\bigsqcap
  M)(R)\neq\varrho(R)$. Since we know that $j$ is maximum such that
  $\varrho \in M_j$, it follows that $\varrho \notin M_{j+1}$, hence
  there is a relation $R$ with $rank(R)=j$ such that $(\bigsqcap
  M)(R)\neq\varrho(R)$; thus (\ref{itm:rank-s}) holds.

  \noindent Now we need to show that $\bigsqcap M$ is the greatest
  lower bound. Let us assume that $\varrho' \sqsubseteq \varrho$ for
  all $\varrho \in M$, and let us show that $\varrho' \sqsubseteq
  \bigsqcap M$. If $\varrho' = \bigsqcap M$ the result holds vacuously,
  hence let us assume $\varrho' \neq \bigsqcap M$. Then there exists a
  minimal $j$ such that $(\bigsqcap M)(R)\neq\varrho'(R)$ for some $R$
  with $rank(R)=j$. Let us first consider $R$ such that
  $rank(R)<j$. By our choice of $j$ we have $(\bigsqcap M)(R) =
  \varrho'(R)$ hence (\ref{itm:rank-less}) holds.

  \noindent Next assume that $rank(R)=j$ and either $R$ is a defined
  relation of $j=0$. Then $\varrho' \sqsubseteq \varrho$ for all
  $\varrho \in M_j$. It follows that $\varrho'(R) \subseteq
  \varrho(R)$ for all $\varrho \in M_j$. Thus we have $\varrho'(R)
  \subseteq \bigcap \{ \varrho(R) \mid \varrho \in M_j \}$. Since
  $(\bigsqcap M)(R) = \bigcap \{ \varrho(R) \mid \varrho \in M_j \}$,
  we have $\varrho'(R) \subseteq (\bigsqcap M)(R)$ which proves
  (\ref{itm:rank-eq-def}).

  \noindent Now assume $rank(R)=j$ and $R$ is a constrained
  relation. We have that $\varrho' \sqsubseteq \varrho$ for all
  $\varrho \in M_j$. Since $R$ is a constrained relation it follows
  that $\varrho'(R) \supseteq \varrho(R)$ for all $\varrho \in
  M_j$. Thus we have $\varrho'(R) \supseteq \bigcup \{ \varrho(R) \mid
  \varrho \in M_j \}$. Since $(\bigsqcap M)(R) = \bigcup \{ \varrho(R)
  \mid \varrho \in M_j \}$, we have $\varrho'(R) \supseteq (\bigsqcap
  M)(R)$ which proves (\ref{itm:rank-eq-con}).

  \noindent Finally since we assumed that $(\bigsqcap
  M)(R)\neq\varrho'(R)$ for some $R$ with $rank(R)=j$, it follows that
  (\ref{itm:rank-s}) holds. Thus we proved that $\varrho' \sqsubseteq
  \bigsqcap M$.
  \qed

\end{proof}

\section{Proof of Proposition \ref{prop:moore-family}}\label{proof:prop:moore-family}

In order to prove Proposition \ref{prop:moore-family} we first state
and prove two auxiliary lemmas.

\begin{definition}\label{def:order-j}
\noindent We introduce  an ordering $\subseteq_{/j}$ defined by $\varrho_1
\subseteq_{/j} \varrho_2$ if and only if
\begin{itemize}
\item $\forall R: rank(R) < j \Rightarrow \varrho_1(R) = \varrho_2(R)$
\item $\forall R: rank(R) = j \Rightarrow \varrho_1(R) \subseteq
  \varrho_2(R)$
\end{itemize}
\end{definition}

\begin{lemma}\label{lemma:cond}
  Assume a condition $\Cond$ occurs in $cl_j$, and let $\varsigma$ be
  a valuation of free variables in $\Cond$. If $\varrho_1
  \subseteq_{/j} \varrho_2$ and $(\varrho_1,\varsigma) \models \Cond$
  then $(\varrho_2,\varsigma) \models \Cond$.
\end{lemma}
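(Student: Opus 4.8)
The statement is a monotonicity property for conditions, and the natural approach is structural induction on the condition $\Cond$. The key observation is that the relation $\subseteq_{/j}$ is carefully designed so that relations of rank strictly less than $j$ are interpreted \emph{identically} by $\varrho_1$ and $\varrho_2$, while relations of rank exactly $j$ may grow from $\varrho_1$ to $\varrho_2$. Stratification (Definition~\ref{def:stratification}) guarantees that every relation \emph{negatively} queried in $\Cond$ — which occurs in $cl_j$ — has rank strictly less than $j$, hence is interpreted the same way by both; and every relation \emph{positively} queried in $\Cond$ has rank at most $j$. This is exactly what makes the induction go through.

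\textbf{Key steps.} First I would set up the induction on the structure of $\Cond$, with the valuation $\varsigma$ of free variables varying (so the statement for quantifiers can appeal to the induction hypothesis with an extended valuation $\varsigma[x\mapsto a]$). For the base case $R(\vec x)$: since $R$ is positively used in $cl_j$, stratification forbids $R$ from being asserted in $cl_j,\ldots,cl_s$ only in the negative case, but in the positive case $R$ may be asserted in $cl_j$, so $rank(R)\leq j$; then $\sem{\vec x}([\,],\varsigma)\in\varrho_1(R)$ implies $\sem{\vec x}([\,],\varsigma)\in\varrho_2(R)$ because either $rank(R)<j$ (equality) or $rank(R)=j$ (inclusion $\varrho_1(R)\subseteq\varrho_2(R)$). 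For the base case $\neg R(\vec x)$: here $R$ is negatively used in $cl_j$, so by the third bullet of Definition~\ref{def:stratification} $R$ is not asserted in $cl_j,\ldots,cl_s$, giving $rank(R)<j$; hence $\varrho_1(R)=\varrho_2(R)$, and $\sem{\vec x}([\,],\varsigma)\notin\varrho_1(R)$ gives $\sem{\vec x}([\,],\varsigma)\notin\varrho_2(R)$ directly. The cases $\Cond_1\wedge\Cond_2$ and $\Cond_1\vee\Cond_2$ follow immediately from the induction hypotheses applied to the subconditions (note all subconditions still occur in $cl_j$). For $\exists x:\Cond'$ and $\forall x:\Cond'$, apply the induction hypothesis with valuation $\varsigma[x\mapsto a]$ for the appropriate $a$ (some $a$, resp.\ all $a$). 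The cases $\true$ and $\false$ are trivial.

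\textbf{Main obstacle.} The only subtle point is the negative-query case, and specifically invoking stratification correctly: one must observe that because $\Cond$ occurs in $cl_j$, a negative occurrence $\neg R(\vec x)$ means $R$ is negatively used in $cl_j$, and then the third clause of Definition~\ref{def:stratification} (``relations negatively used in $cl_i$ must not be asserted in $cl_i,\ldots,cl_s$'') yields $rank(R)=\max(\{0\}\cup\{i\mid R\text{ asserted in }cl_i\})<j$ — here it is essential that the forbidden range \emph{includes} $cl_j$ itself, not just $cl_{j+1},\ldots,cl_s$. With this pinned down, the interpretations of $R$ agree and the negative query is preserved. Everything else is a routine structural recursion, so the write-up should be short.
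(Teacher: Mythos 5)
Your proposal is correct and follows essentially the same route as the paper's proof: a structural induction on $\Cond$ with the two base cases split on $rank(R)<j$ versus $rank(R)=j$ for positive queries, and $rank(R)<j$ forced for negative queries. In fact you are slightly more careful than the paper, which simply asserts ``Since $rank(R)<j$'' in the $\neg R(\vec x)$ case, whereas you correctly trace this back to the third clause of Definition~\ref{def:stratification}; the only cosmetic difference is that the paper wraps the structural induction inside an (unused) outer induction on $j$.
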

\begin{proof}
  We proceed by induction on $j$ and in each case perform a structural
  induction on the form of the condition $\Cond$ occurring in
  $cl_j$.\newline \textbf{Case: }$\Cond=R(\vec x)$ \newline Assume
  $\varrho_1 \subseteq_{/j} \varrho_2$ and
$$
(\varrho_1,\varsigma) \models R(\vec x)
$$
From Table \ref{LFPSemantics} it follows that
$$
\sem{\vec x}([\,], \varsigma) \in \varrho_1(R)
$$
Depending of the rank of $R$ we have two sub-cases.\newline (1) Let
$rank(R)<j$, then from Definition \ref{def:order-j} we know that
$\varrho_1(R) = \varrho_2(R)$ and hence
$$
\sem{\vec x}([\,], \varsigma) \in \varrho_2(R)
$$
Which according to Table \ref{LFPSemantics} is equivalent to
$$
(\varrho_2,\varsigma) \models R(\vec x)
$$
(2) Let us now assume $rank(R)=j$, then from Definition
\ref{def:order-j} we know that $\varrho_1(R) \subseteq \varrho_2(R)$ and hence
$$
\sem{\vec x}([\,], \varsigma) \in \varrho_2(R)
$$
which is equivalent to
$$
(\varrho_2,\varsigma) \models R(\vec x)
$$
and finishes the case. \newline \textbf{Case: }$\Cond=\neg R(\vec x)$
\newline Assume $\varrho_1 \subseteq_{/j} \varrho_2$ and
$$
(\varrho_1,\varsigma) \models \neg R(\vec x)
$$
From Table \ref{LFPSemantics} it follows that
$$
\sem{\vec x}([\,], \varsigma) \notin \varrho_1(R)
$$
Since $rank(R)<j$, then from Definition \ref{def:order-j} we have
$\varrho_1(R) = \varrho_2(R)$ and hence
$$
\sem{\vec x}([\,], \varsigma) \notin \varrho_2(R)
$$
Which according to Table \ref{LFPSemantics} is equivalent to
$$
(\varrho_2,\varsigma) \models \neg R(\vec x)
$$

\noindent \textbf{Case: }$\Cond=\Cond_1 \wedge \Cond_2$

\noindent Assume $\varrho_1 \subseteq_{/j} \varrho_2$ and
$$
(\varrho_1,\varsigma) \models \Cond_1 \wedge \Cond_2
$$
From Table \ref{LFPSemantics} it follows that
$$
(\varrho_1,\varsigma) \models \Cond_1 \text{ and } (\varrho_1,\varsigma) \models \Cond_2
$$
The induction hypothesis gives
$$
(\varrho_2,\varsigma) \models \Cond_1 \text{ and } (\varrho_2,\varsigma) \models \Cond_2
$$
Hence we have 
$$
(\varrho_2,\varsigma) \models \Cond_1 \wedge \Cond_2
$$

\noindent \textbf{Case: }$\Cond=\Cond_1 \vee \Cond_2$

\noindent Assume $\varrho_1 \subseteq_{/j} \varrho_2$ and
$$
(\varrho_1,\varsigma) \models \Cond_1 \vee \Cond_2
$$
From Table \ref{LFPSemantics} it follows that
$$
(\varrho_1,\varsigma) \models \Cond_1 \text{ or } (\varrho_1,\varsigma) \models \Cond_2
$$
The induction hypothesis gives
$$
(\varrho_2,\varsigma) \models \Cond_1 \text{ or } (\varrho_2,\varsigma) \models \Cond_2
$$
Hence we have 
$$
(\varrho_2,\varsigma) \models \Cond_1 \vee \Cond_2
$$

\noindent \textbf{Case: }$\Cond=\exists x: \CondPr$

\noindent Assume $\varrho_1 \subseteq_{/j} \varrho_2$ and
$$
(\varrho_1,\varsigma) \models \exists x: \CondPr
$$
From Table \ref{LFPSemantics} it follows that
$$
\exists a \in \U: (\varrho_1,\varsigma[x \mapsto a]) \models \CondPr
$$
The induction hypothesis gives
$$
\exists a \in \U: (\varrho_2,\varsigma[x \mapsto a]) \models \CondPr
$$
Hence from Table \ref{LFPSemantics} we have
$$
(\varrho_2,\varsigma) \models \exists x: \CondPr
$$

\noindent \textbf{Case: }$\Cond=\forall x: \CondPr$

\noindent Assume $\varrho_1 \subseteq_{/j} \varrho_2$ and
$$
(\varrho_1,\varsigma) \models \forall x: \CondPr
$$
From Table \ref{LFPSemantics} it follows that
$$
\forall a \in \U: (\varrho_1,\varsigma[x \mapsto a]) \models \CondPr
$$
The induction hypothesis gives
$$
\forall a \in \U: (\varrho_2,\varsigma[x \mapsto a]) \models \CondPr
$$
Hence from Table \ref{LFPSemantics} we have
$$
(\varrho_2,\varsigma) \models \forall x: \CondPr
$$
\qed
\end{proof}

\begin{lemma}\label{lemma:cl}
  If $\varrho=\bigsqcap M$ and $(\varrho', \zeta, \varsigma) \models
  cl_j$ for all $\varrho' \in M$ then $(\varrho,\zeta, \varsigma)
  \models cl_j$.
\end{lemma}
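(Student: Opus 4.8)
The plan is to establish a ``model-intersection'' property for a single layer: the greatest lower bound $\varrho=\bigsqcap M$ of a family $M$ of models of $cl_j$ is again a model of $cl_j$. Two structural facts will drive the argument. First, since each clause is either a $\define$ or a $\constrain$ and stratification (Definition \ref{def:stratification}) forces every $R$ with $rank(R)=j$ to be asserted in $cl_j$ and in no later clause, all relations of rank $j$ are uniformly of a single kind: all \emph{defined} if $cl_j=\define(\Def)$, all \emph{constrained} if $cl_j=\constrain(\Con)$. Second, by Lemma \ref{lemma:complete-lattice}, $\varrho(R)$ for such $R$ is obtained as $\bigcap$ (respectively $\bigcup$) taken over the restricted family $M_j=\{\varrho'\in M\mid \forall R': rank(R')<j\Rightarrow\varrho(R')=\varrho'(R')\}$; since $M_j\subseteq M$, every $\varrho'\in M_j$ still satisfies $cl_j$. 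The whole proof is carried out over $M_j$, not $M$.

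For $cl_j=\define(\Def)$ I would first note $\varrho\subseteq_{/j}\varrho'$ for every $\varrho'\in M_j$ (Definition \ref{def:order-j}): relations of rank $<j$ agree by definition of $M_j$, and for relations of rank $j$ --- all defined --- we have $\varrho(R)=\bigcap\{\varrho''(R)\mid\varrho''\in M_j\}\subseteq\varrho'(R)$. Then I would show, by structural induction on $\Def$ (which occurs in $cl_j$), the following claim for an arbitrary valuation $\varsigma$ (stating it for all $\varsigma$ makes the $\forall x:$ case go through): if $(\varrho',\zeta,\varsigma)\models\Def$ for all $\varrho'\in M_j$, then $(\varrho,\zeta,\varsigma)\models\Def$. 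Conjunctions and $\forall x:$ are immediate from the induction hypothesis (the latter instantiated at $\varsigma[x\mapsto a]$ for each $a\in\U$). In the base case $\Def=\Cond\Rightarrow R(\vec u)$: assume $(\varrho,\varsigma)\models\Cond$; since $\varrho\subseteq_{/j}\varrho'$ and $\Cond$ occurs in $cl_j$, Lemma \ref{lemma:cond} yields $(\varrho',\varsigma)\models\Cond$ for all $\varrho'\in M_j$, hence $\sem{\vec u}(\zeta,\varsigma)\in\varrho'(R)$ for all such $\varrho'$, hence $\sem{\vec u}(\zeta,\varsigma)\in\bigcap\{\varrho'(R)\mid\varrho'\in M_j\}=\varrho(R)$.

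The case $cl_j=\constrain(\Con)$ is dual: now $\varrho'\subseteq_{/j}\varrho$ for every $\varrho'\in M_j$, since for rank-$j$ (constrained) relations $\varrho'(R)\subseteq\bigcup\{\varrho''(R)\mid\varrho''\in M_j\}=\varrho(R)$. Induct structurally on $\Con$; the only nontrivial case is $\Con=R(\vec u)\Rightarrow\Cond$, where from $\sem{\vec u}(\zeta,\varsigma)\in\varrho(R)=\bigcup\{\varrho'(R)\mid\varrho'\in M_j\}$ one picks a witness $\varrho'\in M_j$ with the tuple in $\varrho'(R)$, infers $(\varrho',\varsigma)\models\Cond$, and transports this to $(\varrho,\varsigma)\models\Cond$ by Lemma \ref{lemma:cond} using $\varrho'\subseteq_{/j}\varrho$. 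The step I expect to be most delicate is the bookkeeping linking $M_j$, the $\bigcap$/$\bigcup$ formulas for $\varrho(R)$, and the correct \emph{direction} of $\subseteq_{/j}$ in each case; it is precisely the restriction to $M_j$ that makes $\varrho$ and the members of $M_j$ comparable under $\subseteq_{/j}$, and it is stratification --- conditions in $cl_j$ mention only relations of rank $\le j$ --- that makes $\subseteq_{/j}$ (which ignores higher ranks) strong enough for Lemma \ref{lemma:cond} to apply.
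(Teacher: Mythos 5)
Your proof is correct and follows essentially the same route as the paper's: structural induction on the clause, with the two implication cases handled by establishing $\varrho\subseteq_{/j}\varrho'$ (respectively $\varrho'\subseteq_{/j}\varrho$) and then invoking Lemma \ref{lemma:cond}, using that rank-$j$ relations in a $\define$ (resp.\ $\constrain$) layer are all defined (resp.\ constrained). Your explicit restriction to $M_j$ is in fact more careful than the paper's write-up, which quantifies over all of $M$ and reads off $\varrho(R')=\varrho'(R')$ for every $R'$ of rank $<j$ directly from $\varrho\sqsubseteq\varrho'$ --- an inference that only holds for $\varrho'\in M_j$, since the lexicographic witness index for the pair may fall below $j$ --- so your bookkeeping is exactly what is needed to make the comparison under $\subseteq_{/j}$ and the $\bigcap$/$\bigcup$ identities for $\varrho(R)$ line up.
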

\begin{proof}
We proceed by induction on $j$ and in each case perform a structural
induction on the form of the clause $cl$ occurring in $cl_j$.

\noindent\textbf{Case: }$cl_j=\define(\Cond \Rightarrow R(\vec u))$

\noindent Assume
\begin{equation}
  \forall \varrho' \in M: (\varrho',\zeta,\varsigma) \models \Cond
  \Rightarrow R(\vec u) \label{eq:assumption-def-imply}
\end{equation}
Let us also assume
$$
(\varrho,\varsigma) \models \Cond
$$
Since $\varrho=\bigsqcap M$ we know that
\begin{equation}
\forall \varrho' \in M: \varrho \sqsubseteq \varrho' \label{eq:glb-property}
\end{equation}
Let $R'$ occur in $\Cond$. We have two possibilities; either
$rank(R')=j$ and $R'$ is a defined relation, then from
\eqref{eq:glb-property} if follows that $\varrho(R') \subseteq
\varrho'(R')$. Alternatively $rank(R')<j$ and from
\eqref{eq:glb-property} it follows that $\varrho(R') =
\varrho'(R')$. Hence from Definition \ref{def:order-j} we have that
$\varrho \subseteq_{/j} \varrho'$. Thus from Lemma \ref{lemma:cond} it
follows that
$$
\forall \varrho' \in M: (\varrho',\varsigma) \models \Cond
$$
Hence from \eqref{eq:assumption-def-imply} we have
$$
\forall \varrho' \in M: (\varrho',\zeta,\varsigma) \models R(\vec u)
$$
Which from Table \ref{LFPSemantics} is equivalent to
$$
\forall \varrho' \in M: \sem{\vec u}(\zeta, \varsigma) \in \varrho'(R)
$$
It follows that
$$
\sem{\vec u}(\zeta, \varsigma) \in \bigcup \{ \varrho'(R) \mid \varrho' \in M \} = \varrho(R)
$$
Which from Table \ref{LFPSemantics} is equivalent to
$$
(\varrho,\zeta,\varsigma) \models R(\vec u)
$$
and finishes the case.

\noindent\textbf{Case: }$cl_j=\define(\Def_1 \wedge \Def_2)$

\noindent Assume
$$
  \forall \varrho' \in M: (\varrho',\zeta,\varsigma) \models \Def_1 \wedge \Def_2
$$
From Table \ref{LFPSemantics} we have that for all $\varrho' \in M$
$$
(\varrho',\zeta,\varsigma) \models \Def_1 \textit{ and } (\varrho',\zeta,\varsigma)
\models \Def_2
$$
The induction hypothesis gives
$$
(\varrho,\zeta,\varsigma) \models \Def_1 \textit{ and } (\varrho,\zeta,\varsigma)
\models \Def_2
$$
Hence from Table \ref{LFPSemantics} we have
$$
(\varrho,\zeta,\varsigma) \models \Def_1 \wedge \Def_2
$$

\noindent\textbf{Case: }$cl_j=\define(\forall x: \Def)$

\noindent Assume
\begin{equation}
  \forall \varrho' \in M: (\varrho',\zeta,\varsigma) \models \forall x: \Def
\end{equation}
From Table \ref{LFPSemantics} we have that
$$
\varrho' \in M: \forall a \in \mathcal{U}: (\varrho',\zeta,\varsigma[x
\mapsto a]) \models \Def
$$
Thus
$$
\forall a \in \mathcal{U}: \varrho' \in M: (\varrho',\zeta,\varsigma[x
\mapsto a]) \models \Def
$$
The induction hypothesis gives
$$
\forall a \in \mathcal{U}: (\varrho,\zeta,\varsigma[x
\mapsto a]) \models \Def
$$
Hence from Table \ref{LFPSemantics} we have
$$
 (\varrho,\zeta,\varsigma) \models \forall x: \Def
$$

\noindent\textbf{Case: }$cl_j=\constrain(R(\vec u) \Rightarrow \Cond)$

\noindent Assume
\begin{equation}
  \forall \varrho' \in M: (\varrho',\zeta,\varsigma) \models R(\vec u) 
  \Rightarrow \Cond \label{eq:assumption-con-imply}
\end{equation}
Let us also assume
$$
(\varrho,\zeta,\varsigma) \models R(\vec u)
$$
From Table \ref{LFPSemantics} it follows that
$$
\sem{\vec u}(\zeta, \varsigma) \in \bigcup \{ \varrho'(R) \mid \varrho' \in M \}
$$
Thus there is some $\varrho' \in M$ such that
$$
\sem{\vec u}(\zeta, \varsigma) \in \varrho'(R)
$$
From \eqref{eq:assumption-con-imply} it follows that
$$
(\varrho',\varsigma) \models \Cond
$$
Since $\varrho=\bigsqcap M$ we know that
\begin{equation}
\forall \varrho' \in M: \varrho \sqsubseteq \varrho' \label{eq:glb-property2}
\end{equation}
Let $R'$ occur in $\Cond$. We have two possibilities; either
$rank(R')=j$ and $R'$ is a constrained relation, then from
\eqref{eq:glb-property2} if follows that $\varrho(R') \supseteq
\varrho'(R')$. Alternatively $rank(R')<j$ and from
\eqref{eq:glb-property2} it follows that $\varrho(R') =
\varrho'(R')$. Hence from Definition \ref{def:order-j} we have that
$\varrho' \subseteq_{/j} \varrho$. Thus from Lemma \ref{lemma:cond} it
follows that
$$
(\varrho,\varsigma) \models \Cond
$$
which finishes the case.

\noindent\textbf{Case: }$cl_j=\constrain(\Con_1 \wedge \Con_2)$

\noindent Assume
$$
  \forall \varrho' \in M: (\varrho',\zeta,\varsigma) \models \Con_1 \wedge \Con_2
$$
From Table \ref{LFPSemantics} we have that for all $\varrho' \in M$
$$
(\varrho',\zeta,\varsigma) \models \Con_1 \textit{ and } (\varrho',\zeta,\varsigma)
\models \Con_2
$$
The induction hypothesis gives
$$
(\varrho,\zeta,\varsigma) \models \Con_1 \textit{ and } (\varrho,\zeta,\varsigma)
\models \Con_2
$$
Hence from Table \ref{LFPSemantics} we have
$$
(\varrho,\zeta,\varsigma) \models \Con_1 \wedge \Con_2
$$

\noindent\textbf{Case: }$cl_j=\constrain(\forall x: \Con)$

\noindent Assume
\begin{equation}
  \forall \varrho' \in M: (\varrho',\zeta,\varsigma) \models \forall x: \Con
\end{equation}
From Table \ref{LFPSemantics} we have that
$$
\varrho' \in M: \forall a \in \mathcal{U}: (\varrho',\zeta,\varsigma[x
\mapsto a]) \models \Con
$$
Thus
$$
\forall a \in \mathcal{U}: \varrho' \in M: (\varrho',\zeta,\varsigma[x
\mapsto a]) \models \Con
$$
The induction hypothesis gives
$$
\forall a \in \mathcal{U}: (\varrho,\zeta,\varsigma[x
\mapsto a]) \models \Con
$$
Hence from Table \ref{LFPSemantics} we have
$$
 (\varrho,\zeta,\varsigma) \models \forall x: \Con
$$
\qed
\end{proof}

\noindent{\bf Proposition \ref{prop:moore-family}}: Assume $cls$ is a
stratified LFP formula, $\varsigma_0$ and $\zeta_0$ are
interpretations of the free variables and function symbols in $cls$,
respectively. Furthermore, $\varrho_0$ is an interpretation of all
relations of rank 0. Then $\{ \varrho \mid (\varrho, \zeta_0,
\varsigma_0) \models cls \wedge \forall R: rank(R) = 0 \Rightarrow
\varrho(R) \supseteq \varrho_0(R) \}$ is a Moore family.
\begin{proof}
The result follows from Lemma \ref{lemma:cl}.
\qed
\end{proof}

\section{Proof of Proposition
  \ref{prop:complexity}}\label{proof:prop:complexity}
{\bf Proposition \ref{prop:complexity}}:
For a finite universe $\U$, the best solution $\varrho$ such that
  $\varrho_0 \sqsubseteq \varrho$ of a LFP formula $cl_1, \ldots,
  cl_s$ (w.r.t. an interpretation of the constant symbols) can be
  computed in time
\[
\mathcal{O}(|\varrho_0| + \sum_{1\leq i \leq s} |cl_i||\mathcal{U}|^{k_i})
\]
where $k_i$ is the maximal nesting depth of quantifiers in the $cl_i$
and $|\varrho_0|$ is the sum of cardinalities of predicates
$\varrho_0(R)$ of rank $0$. We also assume unit time hash table
operations (as in \cite{bib:complex}).
\begin{proof}
  Let $cl_i$ be a clause corresponding to the i-th layer. Since $cl_i$
  can be either a define clause, or a constrain clause, we have two
  cases.

  Let us first assume that $cl_i=\define(\Def)$; the proof proceed in
  three phases. First we transform $\Def$ to $\DefPr$ by replacing
  every universal quantification $\forall x: \Def_{\it cl}$ by the conjunction
  of all $|\U|$ possible instantiations of $\Def_{\it cl}$, every existential
  quantification $\exists x: \Cond$ by the disjunction of all $|\U|$
  possible instantiations of $\Cond$ and every universal quantification
  $\forall x: \Cond$ by the conjunction of all $|\U|$ possible
  instantiations of $\Cond$. 
  The resulting clause $\DefPr$
  is logically equivalent to $\Def$ and has size
  \begin{equation}
    \O(|\U|^{k}|\Def|) \label{eq:def-size}
  \end{equation}
  where $k$ is the maximal nesting depth of quantifiers in $\Def$.
  Furthermore, $\DefPr$ is {\it boolean}, which means that there are
  no variables or quantifiers and all literals are viewed as nullary
  predicates.


  In the second phase we transform the formula $\DefPr$, being the
  result of the first phase, into a sequence of formulas
  $\DefDPr=\DefPr_1, \ldots, \DefPr_l$ as follows. We first replace
  all top-level conjunctions in $\DefPr$ with ",". Then we
  successively replace each formula by a sequence of simpler ones
  using the following rewrite rule
\[\Cond_1 \vee \Cond_2
  \Rightarrow R(\vec u) \mapsto \Cond_1 \Rightarrow Q_{new},
  \Cond_2 \Rightarrow Q_{new}, Q_{new} \Rightarrow R(\vec u)
\] 
where
  $Q_{new}$ is a fresh nullary predicate that is generated for each
  application of the rule. The transformation is completed as soon as no
  replacement can be done. The conjunction of the resulting define
  clauses is logically equivalent to $\DefPr$.

  To show that this process terminates and that the size of $\DefDPr$ is
  at most a constant times the size of the input formula $\DefPr$ , we
  assign a cost to the formulae.  Let us define the cost of a sequence
  of clauses as the sum of costs of all occurrences of predicate
  symbols and operators (excluding ","). In general, the cost of
  a symbol or operator is 1 except disjunction that counts 6. Then the
  above rule decreases the cost from $k + 7$ to $k + 6$, for suitable
  value of k. Since the cost of the initial sequence is at most 6
  times the size of $\Def$, only a linear number of rewrite steps can
  be performed. Since each step increases the size at most by a
  constant, we conclude that the $\DefDPr$ has increased just by a
  constant factor. Consequently, when applying this transformation to
  $\DefPr$, we obtain a boolean formula without sharing of size as in
  \eqref{eq:def-size}.  

The third phase solves the system that is a
  result of phase two, which can be done in linear time by the
  classical techniques of e.g. \cite{bib:hornsat}.

Let us now assume that the $cl_i=\constrain(\Con)$. We begin by
transforming $\Con$ into a logically equivalent (modulo fresh
predicates) {\it define} clause. The transformation is done by
function $f_i$ defined as
$$
\begin{array}{lll}
  f_i(\constrain(\Con)) & = & \define(g(\Con)),\define(h_i(\Con)) \\ \\
  g (\forall x: \Con) & = & \forall x: g(\Con) \\
  g(\Con_1 \wedge \Con_2) & = & g(\Con_1) \wedge g(\Con_2) \\
  g (R(\vec u) \Rightarrow \Cond) & = & (\neg \Cond[R^\complement(\vec
  u) / \neg R(\vec u)] \Rightarrow R^\complement(\vec u)) \\ \\

  h_i(\forall x: \Con) & = & \forall x: h_i(\Con) \\
  h_i(\Con_1 \wedge \Con_2) & = & h_i(\Con_1) \wedge h_i(\Con_2) \\
  h_i (R(\vec u) \Rightarrow \Cond) & = & let\ \CondPr = \Cond[true / (R'(\vec v) \mid rank(R')=i)]\ in \\
  & & \CondPr \wedge \neg R^\complement(\vec u) \Rightarrow R(\vec u) \\
\end{array}
$$
where $R^\complement$ is a new predicate corresponding to the
complement of $R$. The size of the formula increases by a number of
constraint predicates; hence the size of the input formula is increased
by a constant factor. Then the proof proceeds as in case of {\it
  define} clause.

The three phases of the transformation result in the sequence of
define clauses of size
\[
\O(\sum_{1\leq i \leq s} |cl_i||\mathcal{U}|^{k_i})
\]
which can then be solved in linear time. We also need to take into
account the size of the initial knowledge i.e.~the cardinality of all
predicates of rank $0$; thus the overall worst case complexity is
\[
\O(|\varrho_0| + \sum_{1\leq i \leq s} |cl_i||\mathcal{U}|^{k_i})
\]
\qed
\end{proof}

\end{document}